\newif\ifNOTES
    \def\withcolors{1}
        \def\withnotes{1}
        \def\withnotes{0}
\def\eps{\ve}
\renewcommand{\epsilon}{\ve}
\def\ve{\varepsilon}
\newcommand{\sub}{\subseteq}
\newcommand{\ideas}{\mathsf{ideas}}
\newcommand{\id}{\mathsf{id}}
\newcommand{\Ideas}{\cI}
\newcommand{\kv}{\mathsf{kv}}
\renewcommand{\u}{\mathsf{u}}
\newcommand{\scrub}{\mathsf{scrub}}
\newcommand{\subsim}{\mathsf{SubSim}}
\newcommand{\train}{\mathsf{Train}}
\newcommand{\coin}{\mathsf{coin}}
\newcommand{\W}{\cW}
\newcommand{\Graph}{G}
\newcommand{\Edges}{E}
\newcommand{\C}{\cC}
\newcommand{\aux}{\mathsf{aux}}
\newcommand{\Models}{\cP}
\newcommand{\DD}{\mathfrak{D}}
\newcommand{\D}{D}
\newcommand{\B}{\mathfrak{B}}
\newcommand{\pdot}[1]{p(\cdot|#1)}
\newcommand{\pdotx}{\pdot{x}}
\newcommand{\dmax}{\Delta_{\mathrm {max}}}
\newcommand{\safe}{\mathsf{safe}}
\newcommand{\shardsafe}{\mathsf{sharded\mbox{-}safe}}
\newcommand{\CP}{\mathsf{CP}}
\newcommand{\CPk}{\mathsf{CP\text{-}k}}
\newcommand{\pr}[2][]{\mathrm{Pr}\ifthenelse{\not\equal{}{#1}}{_{#1}}{}\!\left[#2\right]}
\newcommand{\dtv}{d_{\mathrm {TV}}}
\DeclareMathOperator{\argmax}{argmax}
\providecommand{\poly}{\operatorname*{poly}}
\newcommand{\Bern}{\mathsf{Bernoulli}}
\theoremstyle{plain}
\newtheorem{theorem}{Theorem}
\newtheorem*{theorem*}{Theorem}
\newtheorem{desideratum}{Desideratum}
\newtheorem{proposition}[theorem]{Proposition}
\newtheorem{lemma}[theorem]{Lemma}
\newtheorem{claim}[theorem]{Claim}
\newtheorem{corollary}[theorem]{Corollary}
\newtheorem{definition}[theorem]{Definition}
\newtheorem{example}[theorem]{Example}
\numberwithin{example}{section}
\numberwithin{theorem}{section} 
\numberwithin{nontheorem}{section} 
\numberwithin{proposition}{section} 
\numberwithin{observation}{section} 
\numberwithin{fact}{section} 
\numberwithin{lemma}{section} 
\numberwithin{claim}{section} 
\numberwithin{corollary}{section} 
\numberwithin{case}{section} 
\numberwithin{dfn}{section} 
\numberwithin{definition}{section} 
\numberwithin{openquestion}{section} 
\numberwithin{res}{section} 
\theoremstyle{definition} %
    \newtheorem{remark}[theorem]{Remark}
    \numberwithin{appnote}{section}
    \numberwithin{remark}{section} 
\theoremstyle{plain}
\newcommand{\ignore}[1]{\leavevmode\unskip} %
\newcommand{\cC}{\mathcal{C}}
\newcommand{\cI}{\mathcal{I}}
\newcommand{\cP}{\mathcal{P}}
\newcommand{\cW}{\mathcal{W}}
\newcommand{\cX}{\mathcal{X}}
\newcommand{\cY}{\mathcal{Y}}
\renewcommand{\tt}[1]{\texttt{#1}}
\newcommand{\Supp}{\mathrm{Supp}}
\title{Blameless Users in a Clean Room:\\Defining Copyright Protection for Generative Models}
\author{Aloni Cohen\\Department of Computer Science \\University of Chicago\\\texttt{aloni@uchicago.edu}}
\begin{document}
\maketitle

\begin{abstract}
    Are there any conditions under which a generative model's outputs are guaranteed not to infringe  the copyrights of its training data? 
    This is the question of ``provable copyright protection'' first posed in \cite{VyasKB23}. They define \emph{near access-freeness} (NAF) and propose it as sufficient for protection.
    This paper revisits the question and establishes new foundations for provable copyright protection---foundations that are firmer both technically and legally.
    First, we show that NAF alone does not prevent infringement. In fact, NAF models can enable verbatim copying, a blatant failure of copyright protection that we dub being \emph{tainted}.
    Then, we introduce our \emph{blameless copyright protection} framework for defining meaningful guarantees, and instantiate it with \emph{clean-room copyright protection}. Clean-room copyright protection allows a user to control their risk of copying by behaving in a way that is unlikely to copy in a counterfactual ``clean-room setting.''
    Finally, we formalize a common intuition about differential privacy and copyright by proving that DP implies clean-room copyright protection when the dataset is \emph{golden}, a copyright deduplication requirement.
\end{abstract}

\section{Introduction}
A user of a generative model is worried about unwitting copyright infringement. She is worried that the model's outputs might resemble copyrighted works in the training data through no fault of her own, exposing her to  legal liability.
The user would like some assurance that this won't happen.
We ask: \textbf{What assurances can the model provider give, and under what conditions?}

Ideally, the generative model would {never} reproduce copyrighted work. That's unrealistic. 
Typical models can be prompted to generate copyrighted output:
$\tt{print the following \underline{\phantom{foob}}}$ \cite{VyasKB23,lee2024talkin}.
It's also unnecessary. In the example, the copying is clearly the user's fault. 

Instead, we should protect blameless users from inadvertent infringement. {If the model reproduces copyrighted work, it should be because the user induced it to} (or was very unlucky).
This guarantee should satisfy a careful, honest user. 
As long as the user's use of the model does not itself cause the infringement, consciously or subconsciously,
then the result is unlikely to infringe.

Our goal is to formalize such a guarantee---in a mathematically and legally rigorous way---and to study its feasibility. We're after ``provable copyright protection'' at deployment time, first studied by Vyas, Kakade, and Barak \cite{VyasKB23}. They propose \emph{near access-freeness} (NAF) as an answer.

\paragraph{Our contributions} This paper establishes new foundations for provable copyright protection---foundations that are firmer both mathematically and legally than prior work.
    \begin{enumerate}
\item \textbf{We prove that \emph{near access-free} (NAF) models can enable verbatim copying.} 
NAF is the first (and only other) attempt at a mathematical definition intended to offer ``provable copyright protection'' for generative models~\cite{VyasKB23}. The proof leverages NAF's lack of protection against multiple prompts or data-dependent prompts.

\item \textbf{We define \emph{tainted models}, which enable users to reproduce verbatim training data} despite knowing nothing about the underlying dataset, a blatant failure of copyright protection~(Section~\ref{sec:tainted}). A meaningful definition of provable copyright protection must exclude tainted models. NAF does not. 

\item \textbf{We introduce a framework for defining copyright protection guarantees, called \emph{blameless copyright protection}}~(Section~\ref{sec:blameless}). 
It protects \emph{blameless} users---those who don't themselves induce infringement---from unwitting copying.

\item \textbf{We define \emph{clean-room copyright protection (($\kappa,\beta)$-clean)}, a first instantiation of our framework}~(Section~\ref{sec:clean-room}), drawing inspiration from \emph{clean-room design}. A training algorithm is $(\kappa,\beta)$-clean if, for every user who copies in a (counterfactual) ``clean-room environment'' with probability $\le \beta$, the probability of copying in the real world is $\le\kappa$. 
Clean-room copyright protection lets users choose their tolerance for risk $\kappa$ and then tune $\beta$ accordingly.
It also excludes tainted models under mild assumptions.

\item \textbf{We prove that \emph{differentially private} (DP) training implies clean-room copyright protection}, formalizing a common intuition about DP and copyright (Section~\ref{sec:DP}). This holds when the dataset is \emph{golden}, a copyright deduplication requirement. 
Thus, DP provides a way to bring copyrighted expression ``into the clean room'' without tainting the model. 
\end{enumerate}

This paper does not offer a legal analysis of where liability would lie under existing law. 
Copyright infringement is generally treated as a strict liability tort, suggesting that the user may always liable.
This paper is premised on the possibility that strict liability may sometimes be unjust when generative AI is involved. 
The simple fact is that there is no feasible, general way for a user to determine with confidence whether a particular model output infringes or not. Some legal scholars agree~\cite{chatterjee2019minds,goodyear2025artificial}, reasoning from caselaw requiring ``some element of volition or causation.''\footnote{\emph{Religious Tech.\ Center v.\ Netcom On-Line Comm.,} 907 F.\ Supp.\ 1361 (N.D.\ Cal.\ 1995)}

Instead of advancing such legal arguments,
 our paper develops a mathematical framework for preventing this sort of unwitting infringement---where holding the user liable would be unjust---and proposes an indemnification policy for assigning fault when copying does inevitably occur (Section~\ref{sec:discussion}).

\paragraph{A reader's guide}
    This paper can be read linearly from beginning to end. Depending on your interests, you might consider reading out of order at first, referencing earlier sections as needed. 
    Sections~\ref{sec:intro:notation},~\ref{sec:legal-interface}, and~\ref{sec:interaction} are required for anything that follows them. 
    For near access-freeness and its limitations: read Sections~\ref{sec:naf}--\ref{sec:tainted}.
    For our approach to defining meaningful copyright protection: read Sections~\ref{sec:tainted}--\ref{sec:clean-room}.
    For the protection offered by differential privacy: start with Section~\ref{sec:DP} for the high level claims, then read Section~\ref{sec:clean-room} to understand the precise formalization.
    For discussion of practical and policy considerations: read Section~\ref{sec:discussion}.

\paragraph{Related work}
There is a lot of recent work on copyright questions for generative AI \cite{cooper2023report,lee2024talkin,henderson2023foundation}, and the first generation of cases are working their way through the courts (e.g., \emph{New York Times v Microsoft}).
But most are only tangentially related to goal of stating and proving formal guarantees against copyright infringement.

Vyas, Kakade, and Barak were the first to propose a mathematical property---near access-freeness (NAF)---aimed at ``preventing deployment-time copyright infringement'' \cite{VyasKB23}.
Elkin-Koren, Hacohen, Livni, and Moran argue that copyright cannot be ``reduced to privacy'' \cite{elkinkoren2024copyright}. Referring to both NAF and DP by umbrella term ``algorithmic stability'', they argue that the sort of provable guarantees that \cite{VyasKB23} and this paper seek do not capture copyright's complexities. 
Li, Shen, and Kawaguchi propose and empirically evaluate an attack called VA3 against NAF~\cite{li2024va3}. They provide good evidence that the $\CPk$ algorithm of \cite{VyasKB23} may not prevent infringement, but some uncertainty remains (see Appendix~\ref{app:sec:va3}).
We discuss these three papers at length. There is also work on new NAF algorithms \cite{golatkar2024cpr,chen2024randomization}.

Scheffler, Tromer, and Varia \cite{scheffler2022formalizing} give a complexity-theoretic account of substantial similarity, and a procedure for adjudicating disputes. In contrast, we treat  substantial similarity as a black box.
That differential privacy might protect against infringement has been suggested in~\cite{bousquet2020synthetic,henderson2023foundation,VyasKB23,elkinkoren2024copyright,chen2024randomization}. 

Livni, Moran, Nissim, and Pabbaraju \cite{livni2024credit} study the problem of attributing credit to inputs of an algorithm that influence its output, motivated by copyright. They define a condition under which a given input need not be credited (with credit required otherwise). Under their definition, a differentially private algorithm never has to credit its inputs---a manifestation of the common intuition that differential privacy prevents copying which we discuss in Section~\ref{sec:DP}.

\subsection{On the use of legal terminology}
    This paper is about copying-as-in-copyright, not copying broadly construed. It is motivated by legal and policy questions coming from copyright law and draws on US copyright law specifically. For this reason, we frame our inquiry as about ``copyright protection.'' 
    On the other hand, we treat the legal concepts as exogenous and
    until Section~\ref{sec:clean-room} our paper is agnostic as to whether training data is in-copyright or not.
    So in some ways our paper can be read as about copying broadly.
    
    Where we use legal terms of art---substantial similarity, the ideas-expression dichotomy, and derivative works---we intend to invoke the legal meaning in the US. Where we wish to avoid doing so, we try to use different words. For example, using variants of ``blame'' instead of fault, liability, innocence, guilt, etc. Any use of legal terms without invoking their legal meaning is unintended.

    We make no claims or conclusions of law. In particular, we do not intend to suggest that our definitions or techniques provide an absolute shield against liability for infringing outputs, nor are we trying to reduce copyright to differential privacy (Appendix~\ref{sec:niva}). 
    This might seem at odds with the phrase ``provable copyright protection,'' adopted from~\cite{VyasKB23}. We use it in the sense that cryptographers use the term ``provable security'' to describe an analytic approach involving formally defining mathematical properties that limit risk in specific settings and proving that our algorithms satisfy those properties. We can push the analogy further: ``copyright protection'' is, like ``security,'' a broad motivating goal, whereas ``$(\kappa,\beta)$-clean'' is analogous to a specific cryptographic security property like IND-CPA.

\begin{table}[h]
    \small
    \centering
    \begin{tabular}{ll}
    \toprule
    \textbf{Symbol} & \textbf{Definition} \\
    \midrule
    \(\W;~ w,y,z\) & Domain of {works} (e.g., images, text); individual {works} \\
    \(\D\) & Dataset $\D = (w_1, \dots, w_n)$ \\
    \(\train\) & Training algorithm mapping datasets $\D$ to models $p$ \\
    \(p; ~ p(y \mid x)\) & {Conditional generative model}; the probability of generating $y$ on prompt $x$ \\
    \(\aux\) & {Auxiliary information} given to the user (usually $\aux = \ideas(w)$) \\
    \(\u\) & User taking input $\aux$ and access to $p$ and outputting a work $z\gets \u^p(\aux)$  \\
    \(\tau;~\tau(E;\,\aux)\) & {Real-world distribution}; the probability of event $z\in E$ in the \\ & probability experiment $p\gets \train(\D)$, $z \gets \u^p(\aux)$\\
    &\\
    \(\C;~ c\) & Set of \textbf{in-copyright} works; an in-copyright work \\
    \(\subsim(w)\) & Set of works \textbf{substantially similar} to \(w\)\\
    \(\ideas(w)\) & Non-copyrightable \textbf{ideas} associated with $w$ \\
    ``$w'$ stems from $w$'' & $w'$ is a \textbf{derivative work} of $w$, represented as edges in graph $G$ \\ 
    $\C_\D;~ \C_{-\D}$ & In-copyright works from which some work ($\C_\D$) / no work ($\C_{-\D}$) in $\D$ stems \\
    $\scrub(\D,c)$ & Sub-dataset of works in $\D$ that don't stem from $c$ \\
    \(\tau_{-c};~\tau_{-c}(E;\,\aux)\) & Clean-room distribution, defined like $\tau$ but replacing $\D$ with $\scrub(\D,c)$\\
    \bottomrule
    \end{tabular}
    \caption{Common symbols and their definitions. Bolded terms are exogenously (legally) defined.}
    \label{tab:notation}
    \end{table}

\subsection{Notation}
\label{sec:intro:notation}
$\W$ is the domain of \emph{works} protected by copyright law.
For example, $\W$ might be the space of possible images, texts, or songs. We assume $\W$ is discrete.
We usually use $w\in \W$ to denote an individual work, along with $c,y,z\in \W$ as described in the following.
We denote by $\C \sub \W$ the set of all {in-copyright} works---those currently under copyright protection---and denote a particular copyrighted work by $c \in \C$. 
A dataset $\D = (w_1,\dots, w_n)\in \W^*$ is a list of works with multiplicities allowed. We sometimes abuse notation and treat $\D$ as a set.
A \emph{conditional generative model (model)}  $p$ is a mapping from a prompt $x$ to a probability distribution $p(\cdot|x)$ over $\W$ which samples $y\in \W$ with probability $p(y|x)$. 
A training algorithm $\train$ maps a dataset $\D$ to a model $p$.
A user $\u$ is an algorithm which uses black-box access to a model $p$, along with  \emph{auxiliary input} $\aux\in \{0,1\}^*\cup\{\bot\}$, and outputs a work $z\in \W$ (see Section~\ref{sec:interaction}).

\section{The legal-technical interface}
\label{sec:legal-interface}

To use legal concepts within a mathematical formalism, we assume some exogenously-defined functions reflecting and operationalizing those concepts. They are the interface between math and law, enabling mathematical study of legal concepts without formalizing the concepts themselves.
Precise definitions of these concepts are out of scope (perhaps impossible, though see \cite{scheffler2022formalizing} for an attempt at formalizing substantial similarity).
Three legal concepts are needed for this paper: \emph{substantial similarity}, \emph{ideas}, and \emph{access}.
This section defines functions $\subsim$ and $\ideas$ for the first two. Section~\ref{sec:formalizing-access} operationalizes access.

The owner of a copyright in a work has exclusive rights to reproduce, distribute, display, adapt, and perform the work. 
Original works of authorship are eligible for copyright protections as soon as they are fixed in a tangible medium, and eventually expires. 
While the work is protected, it is \emph{in-copyright}.
Many types of works are eligible, including books, music, poetry, plays, choreography, photographs, video, and paintings. 
Copying without permission is sometimes permitted, including under the fair use exception \cite{henderson2023foundation}.

To prove copying, a plaintiff (copyright holder) must prove two things: \emph{substantial similarity} and \emph{access}.\footnote{%
    We do not discuss any sort of secondary liability, including vicarious or contributory infringement.}
The former requires the defendant work's to be substantially similar to the original, with vague tests varying by jurisdiction.\footnote{For example, in the Ninth Circuit it requires determining ``whether the ordinary, reasonable audience would find the works substantially similar in the total concept and feel of the works.''}
Access requires the defendant to have had a reasonable opportunity to view the original (e.g., it's online). Independent creation of substantially similar works is allowed.
The function $\subsim$ defines what it means for one work to be {substantially similar} to another. 
\begin{definition}[$\subsim$]
    For a work $w\in \W$, the set $\subsim(w)\sub \W$ contains all works $w'$ that are substantially similar to $w$. 
    We assume $w$ substantially similar to itself: $w\in \subsim(\W)$. For a set of works $W\sub \W$, we define $\subsim(W) = \cup_{w \in W} \subsim(w)$ as the set of works $w'$ substantially similar to any $w\in W$.
\end{definition}

Copyright does not protect \emph{ideas}, only \emph{original expression}. For example, the text and images in a cookbook may be copyrighted, but not the method of cooking a dish. To be afforded protection, there must be a minimum of creativity beyond the ideas.
The function $\ideas$ defines the non-copyrightable \emph{ideas} of a work, as opposed their expression. We represent these ideas as a string which one can view as encoding a set or any other data type.
\begin{definition}[$\ideas$]
    \label{def:ideas}
    For a work $w\in \W$, the string $\ideas(w)\in \{0,1\}^*\cup\{\bot\}$ is (some representation of) the ideas contained in $w$. For a set of works $W\sub\W$, we define the set of all their ideas $\ideas(W) = \{\ideas(w) : w \in \W\}$.
\end{definition}

We never actually compute $\subsim$ nor $\ideas$. This is a feature: it means we can be agnostic as to the specific instantiation. One might view them as capturing the ``true'' legal concept. Or one may take a more practical view. For example, interpreting $\subsim(w)$ as reflecting the opinion of a typical judge or jury, or viewing $\ideas(w)$ as encoding a set of descriptors.
However the reader wishes to interpret these functions is fine (up to any assumptions stated where used). 
\textbf{In particular, our results hold even for $\subsim(w) = \{w\}$ and $\ideas(w) = \bot$.}

In contrast, a model provider (or data provider) would need to actually work with the copyright dependency graph in Section~\ref{sec:clean-room}. We discuss this limitation in Section~\ref{sec:discussion}.

\section{NAF models do not provide provable copyright protection}
\label{sec:naf}
\emph{Near access-freeness} (NAF) is a mathematical definition for ``provable copyright protection'' \cite{VyasKB23}.
This section describes near access-freeness and its limitations. 
We prove that models can enable verbatim copying while still satisfying NAF. 
While NAF provides protection against a single prompt that is independent of the training data, it makes no guarantees against many prompts (composition)~\cite{li2024va3}, nor a single prompt derived from non-copyrightable {ideas} (not expression).
We emphasize that our focus is \emph{definitions} for provable copyright protection, where NAF falls short. Practically, NAF may still provide useful protection in many natural settings. 
For space, we defer algorithms, proofs, and additional discussion to Appendix~\ref{app:NAF}.
 
\subsection{NAF Background}
\label{sec:naf:background}
\paragraph{NAF informally}
\emph{Near access-freeness} (NAF) is the first (and only other) attempt at a mathematical definition intended to offer ``provable copyright protection'' for generative models~\cite{VyasKB23}.
Motivated by the legal requirement of access, NAF is meant to show ``that the defendant's work is close to a work which was produced without access to the plaintiff's work'' \cite{VyasKB23}.
NAF is closely related to DP. But surprisingly, any training algorithm $\train$ can be used in a black-box way to construct a NAF model without the extra noise that is the hallmark of DP (Theorem~\ref{thm:NAF-cP}). %

The definition of NAF requires a generative model $p$ trained using a copyrighted work $c$ to be close to a ``safe'' model $\safe$ trained without any access to $c$. How close is governed by a parameter $k\ge 0$: smaller $k$ means closer models.
A corollary of the NAF requirement is that for any prompt $x$, the probability $p$ generates something substantially similar to $c$ is at most $2^k$-times greater than $\safe$ doing so. 
We heuristically expect the latter probability to be miniscule, if $\safe$ and $x$ are independent of $c$. If so, this bounds the probability that $p$'s output infringes on prompt $x$.

\paragraph{Known limitations of NAF}
Prior works discuss and critique the NAF framework in three ways: questioning 
NAF's technical effectiveness~\cite{li2024va3}, 
its
legal applicability~\cite{lee2024talkin,elkinkoren2024copyright}, 
and its real-world practicality~\cite{henderson2023foundation,elkinkoren2024copyright,lee2024talkin,chen2024randomization}.

$\ast$ \emph{Technically}, Li, Shen, and Kawaguchi propose and empirically evaluate an attack called VA3 against NAF~\cite{li2024va3}. See Appendix~\ref{app:sec:va3} for a full discussion.
They show that a black-box attacker, can reliably induce a model to produce outputs that infringe on a target work $c^*$. They also give a white-box prompt writing algorithm for diffusion models that greatly improves performance.
VA3 provides good evidence that the $\CPk$ algorithm of \cite{VyasKB23} may not prevent infringement. Some uncertainty remains because the algorithm implemented in VA3 deviates from the original. The paper leaves open whether $k$-NAF (with fixed $k$) prevents copyright infringement, formalizes a flawed attack model, and avoids the underlying definitional questions almost entirely. 

$\ast$ \emph{Legally}, Elkin-Koren et al.\ \cite{elkinkoren2024copyright} give the most extensive response to NAF. They correctly argue that copyright cannot be ``reduced to privacy.'' However, this is a straw man of version of \cite{VyasKB23} and the present paper (see Appendix~\ref{sec:niva}). 
The sharpest criticism is by Lee, Cooper, and Grimmelman who argue that NAF is simply wrong on the law \cite{lee2024talkin}. ``It is not a defense to copyright infringement that you would have copied the work from somewhere else if you hadn't copied it from the plaintiff.'' See Appendix~\ref{app:sec:naf:legal} for more discussion of NAF's legal relevance.

$\ast$ \emph{Practically}, the main concern is that deduplicating data in the way needed to make NAF effective is infeasible \cite{elkinkoren2024copyright,lee2024talkin,henderson2023foundation}. It is unrealistic to produce enough clean training data to pretrain foundation models, advances in deduplication notwithstanding. Moreover, NAF learning algorithms are computationally expensive and may lag in performance~\cite{chen2024randomization}.
These concerns are justified and also apply to DP. 
Still, there may be settings where golden data is practical (Sec.~\ref{sec:discussion}), to say nothing of the value of theory.

\subsection{NAF definition and the CopyProtection ($\CP$) algorithm}
Near access-freeness (NAF) is defined with respect to a function $\safe$.
The $\safe$ function maps a copyrighted data point $c \in \C$ to a generative model $\safe_c\in \Models$ trained without access to $c$. An example $\shardsafe$ is in Appendix~\ref{app:NAF} (Alg.~\ref{alg:shard-safe}).
Thm.~\ref{thm:NAF-cP} presents the main feasibility result:  any training algorithm $\train$ can be used as a black-box to construct an NAF model $\CP$, short for \emph{copy protection}. 

\begin{definition}[Max KL divergence]
    For distributions $p$, $q$, $\dmax(p\|q) := \max_{y \in \Supp(p)} \log \frac{p(y)}{q(y)}.$
\end{definition}

\begin{definition}[$k_x$-NAF  \cite{VyasKB23}]
    Fix a set $\C$ and function $\safe: \C \to \Models$.
    A generative model $p$ is \emph{$k_x$-near access-free ($k_x$-NAF)} on prompt $x \in \cX$ with respect to $\C$ and $\safe$ if for every $c \in \C$,
    $\dmax\bigl(\pdotx ~\|~ \safe_c(\cdot, x) \bigr) \leq k_x.$
A model $p$ is \emph{$k$-NAF} with respect to $\C$ and $\safe$ if for all $x \in \cX$, it is $k_x$-NAF for some $k_x \leq k$. 
\end{definition}
NAF bounds the probability that $p$'s output copies $c$ relative to the probability under $\safe$. If $p$ is $k_x$-NAF safe on prompt $x$ with respect to $\C$ and $\safe$, then for any $c\in \C$:
\begin{equation}
    \label{eq:NAF-similarity-bound}
        p(\subsim(c) ~|~ x) \le 2^{k_x}\cdot \safe_c(\subsim(c) ~|~ x).
\end{equation}
Hence, bounding $k_x$ prevents copying $c$, assuming $\safe_c(\subsim(c) | x)$ is negligibly small (in $|c|$).

\begin{theorem}[CopyProtection algorithm \cite{VyasKB23}]
    \label{thm:NAF-cP}
    Let $p$ be the model returned by $\CP$ (Algorithm~\ref{alg:cp} in Appendix~\ref{app:NAF}), and $q_1$ and $q_2$ be the models returned by $\shardsafe$.  Then $p$ is $k_x$-NAF for $x$ with respect to $\C$ and $\shardsafe$, for some $k_x \le -\log\bigl(1-\dtv\bigl(q_1(\cdot|x), q_2(\cdot|x)\bigr)\bigr)$.
\end{theorem}

\subsection{Failures of NAF models}
\label{sec:naf:failures}
We now describe two ways NAF fails to prevent copying, leveraging its lack of protection against multiple prompts (i.e., composition) and data-dependent prompts. In each case, a user reproduces training data $c\in \D$ verbatim.
In Section~\ref{sec:NAF:CP-counterexample:simplified}, the model generates $c$ when prompted with  $\ideas(c)$---a prompt that depends on $c$ but not on any copyrightable expression thereof. In Section~\ref{sec:NAF:coin-counterexample:simplified}, the user issues a fixed sequence of prompts, recovering $k$ bits of the dataset with each query.

\subsubsection{$\CP$ can regurgitate training data}
\label{sec:NAF:CP-counterexample:simplified}

We show that $\CP$---the main NAF algorithm of \cite{VyasKB23}---can fail to protect against copying. Using a prompt containing no copyrightable expression, a user can cause the NAF model returned by $\CP$ to regurgitate copyrighted training data.\footnote{%
    This doesn't  violate Theorem~\ref{thm:NAF-cP}. 
    We use the fact that the $\CP$ algorithm is not actually $k$-NAF for any fixed $k$. It is $k_x$-NAF, where $k_x$ depends on the prompt $x$.
    Applied to our construction, the bound on $k_x$ given by Theorem~\ref{thm:NAF-cP} is not meaningful for $x \in \ideas(\D)$.
    Adapting our construction to the alternate NAF algorithm $\CPk$ in \cite{VyasKB23} results in a model sometimes fails to terminate.}
This is based on an observation of Thomas Steinke \cite{steinkePersonal}.

Observe that $\CP$ is a black-box transformation from an underlying training algorithm $\train$. Thus, $\CP$ is really a family of algorithms---one per $\train$.
The following theorem states that there exists $\train^*$ such that the resulting NAF algorithm $\CP^*$ fails in the following way. On prompt $x=\ideas(c)$ reproduces training datum $c\in \D$ verbatim, as long as $c$ is uniquely distinguishable by its unprotected ideas.\footnote{For example, $\CP^*$ could output \emph{Oh, the Places You'll Go!} on prompt ``Advice in rhyme for proceeding in life; weathering fear, loneliness, and confusion; and being in charge of your actions.''}

\begin{theorem}
    \label{thm:CP-counter-example:simplified}
    For model training algorithm $\train^*$, let $\CP^*$ be the $k_x$-NAF algorithm from Theorem~\ref{thm:NAF-cP}, and let $p^*\gets \CP^*$. 
    For all $\ideas$, there exists a training algorithm $\train^*$ such that for all datasets $\D$ and for all works $c \in \D$, the following holds. 
    If $c$'s ideas are  distinct in $\D$ (i.e., $\forall w\in \D, w\neq c: \ideas(w)\neq\ideas(c)$), then
        \[p^*\left(c ~|~ \ideas(c)\right) = 1.\]
\end{theorem}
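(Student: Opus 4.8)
The plan is to exploit the pointwise-minimum structure of the $\CP$ construction underlying Theorem~\ref{thm:NAF-cP}. Recall that $\shardsafe$ partitions the dataset into two disjoint shards $\D = \D_1 \sqcup \D_2$ and sets $q_i = \train^*(\D_i)$, while $\CP$ returns the model $p^*(\cdot \mid x) \propto \min\!\bigl(q_1(\cdot \mid x),\, q_2(\cdot \mid x)\bigr)$, the renormalized pointwise minimum. This is exactly the construction whose normalization constant $\sum_y \min(q_1(y\mid x), q_2(y\mid x)) = 1 - \dtv(q_1(\cdot\mid x), q_2(\cdot\mid x))$ yields the bound $k_x \le -\log(1 - \dtv)$ in the theorem. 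The key observation is that the minimum of a point mass $\delta_c$ with \emph{any} full-support distribution is supported only on $c$. So it suffices to force the shard containing $c$ to output $\delta_c$ on prompt $\ideas(c)$, while the other shard merely needs to keep $c$ in its support.

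Concretely, I would take the adversarial trainer $\train^*$ to be a \emph{memorizing} algorithm. Fix any base distribution $p_0$ over $\W$ with full support (one exists since $\W$ is discrete). On input a shard $\D_i$, $\train^*$ builds the lookup table $\{\ideas(w) \mapsto w : w \in \D_i\}$ and returns the model that, on prompt $x$, outputs $\delta_w$ if $x = \ideas(w)$ for a unique $w \in \D_i$, and outputs $p_0$ otherwise (in particular on every prompt that is not the ideas of a training work, and on any colliding prompt). Note $\train^*$ depends only on $\ideas$ and on its shard, so it is a valid training algorithm and works uniformly for all $\ideas$.

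Now fix $\D$ and $c \in \D$ whose ideas are distinct in $\D$, and consider the prompt $x = \ideas(c)$. Since $c \in \D$, at least one shard—say $\D_1$, using symmetry of $\min$—contains $c$; by distinctness, $c$ is the unique work of $\D_1$ with ideas $\ideas(c)$, so $q_1(\cdot \mid x) = \delta_c$. For the other shard: by distinctness, no work of $\D_2$ has ideas $\ideas(c)$ unless it equals $c$, so either $c \in \D_2$ and $q_2(\cdot\mid x) = \delta_c$, or $q_2(\cdot\mid x) = p_0$. In the first case $\min(q_1, q_2) = \delta_c$; in the second, $\min(\delta_c, p_0)$ equals $p_0(c) > 0$ at $y = c$ and $0$ at every $y \neq c$. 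Either way the minimum is supported exactly on $\{c\}$, so after renormalization $p^*(c \mid \ideas(c)) = 1$, as claimed.

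The one place that needs care—and the only real obstacle—is the second shard. If $\D_2$ contained some $w \neq c$ with $\ideas(w) = \ideas(c)$, then $q_2(\cdot\mid x) = \delta_w$ and $\min(\delta_c, \delta_w) \equiv 0$, leaving the $\CP$ normalization as $0/0$ and breaking the argument. This degenerate case is precisely what the distinctness hypothesis rules out, which is why the theorem restricts to works $c$ whose ideas are distinct in $\D$. The remaining points—that a full-support $p_0$ exists over a discrete $\W$, that $\ideas(c)$ is an admissible prompt, and that the renormalization constant $p_0(c)$ is strictly positive—are routine.
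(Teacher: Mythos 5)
Your proposal is correct and follows essentially the same route as the paper's proof of Theorem~\ref{thm:CP-counter-example}: instantiate the black-box trainer inside $\CP$ with a key-value memorizing algorithm ($\train_\kv$, Algorithm~\ref{alg:train-cP-counter-example}) that returns the stored work on prompt $\ideas(w)$ and a full-support model otherwise, then observe that the pointwise minimum of $\delta_c$ with a distribution positive at $c$ renormalizes to $\delta_c$. Your variant (falling back to $p_0$ on colliding prompts rather than sampling uniformly from the collision set, and your explicit handling of a duplicated $c$ across shards) is a fine simplification for the distinct-ideas statement, though the paper's uniform-sampling version is what yields the more general conclusion $p(\D_{\id}\mid\id)=1$ when ideas collide.
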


We defer the proof of (a slight generalization) of this theorem to Appendix~\ref{app:sec:naf:counter-examples}.
The proof uses the training algorithm $\train^* = \train_\kv$ (Algorithm~\ref{alg:train-cP-counter-example}) described in Example~\ref{example:tainted-kv}.

\subsubsection{$k$-NAF does not prevent full reconstruction}
\label{sec:NAF:coin-counterexample:simplified}
Our next theorem shows that $k$-NAF may allow training data to be reconstructed, even if $k$ is arbitrarily small and independent of $x$. 
This is because the $k$-NAF guarantee does not compose across a user's many queries, and each query may leak up to $k$ bits of training data. 
In more detail, we give a family of models that are $k$-NAF with respect to pure noise, yet which enable a user to reconstruct the dataset verbatim.
For any $\ell\ge 1$, let $\coin_\ell(\cdot|x)$ be uniform over $\{0,1\}^{\ell}$ for all prompts $x$. As a generative model, $\coin_\ell(\cdot|\cdot)$ is clearly a ``safe'' instantiation of $\safe_c$ for any copyrighted work $c$. 
The proof is deferred to Appendix~\ref{app:sec:naf:counter-examples}.
\begin{theorem}
\label{thm:NAF-counter-example}
Fix $\C \sub \W \sub \{0,1\}^*$. For $\D \in \W^*$, let $L$ be total the length of $\D$ in bits. There exists a (deterministic) training algorithm $\train:(\D,k) \mapsto p_{k,\D}$ satisfying the following.
\begin{itemize}[nosep]
    \item For all $\D$ and $k>0$: $p_{k,\D}$ is $k$-NAF with respect to $\C$ and $\coin_{\ell}$ for $\ell = \max\{1,\lfloor k \rfloor\}$.
    \item There exists a user $\u$ such that for all $\D$ and $k>0$: $\u$ makes $\poly(L,1/k)$ black-box queries to $p_{k,\D}$ and outputs $\D$ with probability $>0.99$.
\end{itemize}
\end{theorem}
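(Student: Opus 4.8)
The plan is to exploit the fact, emphasized in the theorem's setup, that NAF with respect to $\coin_\ell$ imposes only a \emph{per-prompt} constraint and does not compose across queries. Since $\coin_\ell(\cdot|x)$ is uniform on $\{0,1\}^\ell$ for every $x$, the requirement $\dmax\bigl(p(\cdot|x)\,\|\,\coin_\ell(\cdot|x)\bigr)\le k$ is equivalent to asking that $\Supp(p(\cdot|x))\subseteq\{0,1\}^\ell$ and $p(y|x)\le 2^{k-\ell}$ for every $y$. I would design $p_{k,\D}$ so that each prompt $i$ reveals one chunk of an encoding of $\D$ while individually meeting this bound; issuing $\poly(L,1/k)$ such prompts then reconstructs $\D$ even though no single query violates NAF. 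Throughout I encode the dataset as a bitstring $\mathrm{enc}(\D)=\langle L\rangle\,\|\,\mathrm{data}$, where $\langle L\rangle$ is a self-delimiting (e.g.\ Elias) encoding of the bit-length, so a reader can detect where the data ends; WLOG I take $\W=\{0,1\}^*$ so that all chunks are valid works.

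For $k\ge 1$ we have $\ell=\lfloor k\rfloor\ge 1$ and budget to spare: on prompt $i$ the model \emph{deterministically} outputs the $i$-th block of $\ell$ bits of $\mathrm{enc}(\D)$, zero-padding the last block. A point mass on $y^*\in\{0,1\}^\ell$ has $\dmax(\delta_{y^*}\|\coin_\ell)=\log(1/2^{-\ell})=\ell\le k$, so $p_{k,\D}$ is $k$-NAF. The user reads blocks $1,2,\dots$, recovers $\langle L\rangle$, learns where the data stops, and outputs $\D$ exactly after $\lceil(|\langle L\rangle|+L)/\ell\rceil=\poly(L,1/k)$ queries with probability $1$.

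For $0<k<1$ we have $\ell=1$, and NAF now forbids determinism, since it requires $p(b|i)\le 2^{k-1}<1$. I instead make prompt $i$ a biased coin: to encode $b=\mathrm{enc}(\D)_i$, set $p(b|i)=\tfrac12+\gamma$ and $p(1-b|i)=\tfrac12-\gamma$ with $\gamma=(2^k-1)/2$, the largest bias meeting the bound with equality (so $\dmax=\log(1+2\gamma)=k$ and the model is exactly $k$-NAF). Since $\gamma=\Theta(k)$ and the two coins sit symmetrically about $\tfrac12$, the user recovers $\mathrm{enc}(\D)_i$ by a majority vote over $m$ samples of prompt $i$: Hoeffding gives error $\le e^{-2m\gamma^2}$, so $m=O(\gamma^{-2}\log(L/\delta))$ samples per bit together with a union bound over the $O(L)$ bits give total failure $<0.01$, for a total of $O(L\,k^{-2}\log L)=\poly(L,1/k)$ queries. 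The remaining point is that the universal user knows neither $\ell$, nor $\gamma$, nor $L$ in advance: it learns $\ell$ (hence the regime) from the length of a single output; in the noisy regime it estimates $\gamma=\Theta(k)$ from repeated queries to prompt $1$ via a doubling rule with an anytime-valid confidence radius, stopping once the empirical bias clears the radius after $O(\gamma^{-2}\log(1/\delta))$ samples; and it then decodes the self-delimiting prefix $\langle L\rangle$ to obtain $L$ before decoding the data. All of this is adaptive and polynomial.

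The main obstacle is the $k<1$ regime, where lossless transmission is impossible: every prompt must look $\Theta(k)$-close to a fair coin, so information can only be extracted by amplifying a $\Theta(k)$ bias through repetition, and the single user algorithm must bootstrap the unknown channel quality $\gamma$ and message length $L$ from samples alone. Verifying NAF is a one-line divergence computation; the real work is the statistical decoding together with arranging a self-delimiting encoding and adaptive estimation so that one fixed user succeeds for all $\D$ and all $k$ within a $\poly(L,1/k)$ query budget.
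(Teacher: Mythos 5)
Your proposal is correct and takes essentially the same route as the paper's proof: split into the $k\ge 1$ regime, where the model deterministically emits $\ell$-bit blocks of $\D$ (so $\dmax=\ell\le\lfloor k\rfloor\le k$), and the $0<k<1$ regime, where each bit is encoded in a bias of magnitude $(2^k-1)/2$ of a single coin and decoded by majority vote with a Hoeffding-plus-union-bound analysis over $\poly(L,1/k)$ queries. Your extra machinery---the self-delimiting encoding of $L$ and the adaptive, anytime-valid estimation of the bias---carefully fills in bootstrapping details (how one fixed user learns $\ell$, $\gamma$, and where $\D$ ends) that the paper's proof leaves implicit.
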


\section{The interaction between user and model}
\label{sec:interaction}
People interact with generative models in complex ways: refining prompts and using results to create a final product whose author is not solely human nor machine.
Meaningful copyright protection must cover such cases.
To that end, we make the interaction between a user and the model explicit. 

Let $\u$ be a (randomized) algorithm called the \emph{user}, $p$ be a model, and $\aux\in \{0,1\}^*\cup\{\bot\}$ be an \emph{auxiliary input}.
In cryptography, auxiliary inputs allow one to define security guarantees even when an algorithm has instance-specific side information. In our setting, the algorithm is the user, and the side information represents everything the user knows that is external to the model. Often, we will take $\aux$ to be the non-copyrightable ideas of an in-copyright work: $\aux = \ideas(c)$.

We denote by $\u^p(\aux)$ the algorithm $\u$ run with input $\aux$ and black-box access to $p$. The result is a work $z \in \W$ distributed as $z\gets \u^p(\aux)$.
Together with a training algorithm $\train$ and a dataset $\D$, this process induces probability measure $\tau$ over $\W$, defined next.

\begin{definition}[User's output distribution]
    \label{def:tau}
    For a user $\u$, training algorithm $\train$, dataset $\D$, and auxiliary information $\aux$, we define the \emph{user's output distribution} $\tau$ over $\W$ as:
        \[\tau(w; \aux) = \Pr_{\substack{p \gets \train(\D)\\z \gets \u^p(\aux)}}[z=w].\]
    The probability is taken over the randomness of the algorithms $\train$, $\u$, and $p$.
    Note that $\tau$ depends on $\train$, $\u$, and $\D$. The notation elides these dependencies to reduce clutter.
\end{definition}
Legally, the user's output infringes on the copyright of a work $c\in \C$ only if it is substantially similar to $c$. 
Preventing substantial similarity prevents infringement. 
This motivates our first desideratum.
\begin{desideratum}
    \label{des:minimize-tau}
    We want $\mathbf{\tau(\subsim(\C);\aux)}$ to be as small as possible for as many users as possible.
\end{desideratum}
By considering the user's output distribution, we require protection against multiple prompts (composition) and data-dependent prompts (when $\aux$ contains the ideas of work in the training data, as below). This fixes the shortcomings of NAF that enabled the attacks described in Section~\ref{sec:naf:failures}.

\section{Tainted models enable copying}
\label{sec:tainted}

This section defines what it means for a generative model to be \emph{tainted}. 
Tainted models let users copy training data of whose expression they have no knowledge. Recall from Section~\ref{sec:legal-interface} that copyright protects the original expression in a work $w$, but not its ideas $\ideas(w)$.

We say a training algorithm $\train$ is tainted if there is a fixed algorithm $\u$ that copies work from the training dataset using only trained model and unprotected ideas. 
The order of quantifiers is important: the user $\u$ is fixed while the dataset $\D$ is arbitrary. Such a user cannot possibly be at fault---it cannot know a work in all possible datasets $\D$.

\begin{definition}[Tainted training]\label{def:tainted}
    We say $\train$ is \emph{tainted} with respect to $\ideas$ if there exists a user $\u$ such that for all datasets $\D$ and all works $w \in \D$:
        \begin{equation}
        \tau\bigl(\subsim(\D);~\ideas(w)\bigr) > 0.99.\label{eq:tainted}
        \end{equation}
\end{definition}

If the goal is to prevent copying, being tainted is as bad as it gets. 
Thus, we get our next desideratum. We will see that NAF fails this test but Definition~\ref{def:clean} passes.
\begin{desideratum}\label{des:taint}A meaningful definition of provable copyright protection should bar tainted models.
\end{desideratum}

\subsection{Examples}
As an example, we show that any training algorithm can be made tainted (Example~\ref{example:tainted-kv}). This is used to prove Theorem~\ref{thm:NAF-counter-example}.

\begin{example}
    \label{example:tainted-kv}
    Algorithm~\ref{alg:train-cP-counter-example} describes a tainted algorithm $\train_\kv$, constructed from any training algorithm $\train_0$ in a black-box way. In words, $\train_\kv(\D)$ trains a model $q_0\gets \train_0(\D)$, and also builds a key-value store $I$ mapping ideas $\id$ to the set $D_\id = \{w\in\D : \ideas(w) = \id\}$.
     On prompt $x$, the model $q_\kv$ returns a random element of $I[x]=D_x$ if it is non-empty.  
    Otherwise, it returns a generation sampled from $q_0(\cdot | x)$.
    It is easy to see that $\train_\kv$ is tainted. 
    Consider the user $\u^p(\aux)$ that returns a sample from $p(\cdot | \aux)$.
    Fix $\D$ and $w\in \D$, and let $p\gets \train_\kv(\D)$. By construction, the user always outputs an element of $D_{\ideas(w)}$. Therefore, $\tau(\subsim(\D);\ideas(w)) \ge \tau(\D_{\ideas(w)};\ideas(w)) = 1$. 
\end{example}

As a negative example, the following claim states that any training algorithm that always returns a single fixed model is not tainted, under the mild assumption that it is possible for two works to express the same ideas so differently that no work is similar to both.
    \begin{claim}\label{claim:constant-model}
            Suppose there exists a pair of works $w_0,w_1\in \W$ such that $\ideas(w_0) = \ideas(w_1)$ and $\subsim(w_0) \cap \subsim(w_1) = \emptyset$. Then  for any fixed model $q$, the constant algorithm $\train_q(\D) = q$ is not tainted. 
        \end{claim}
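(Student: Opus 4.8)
The plan is to prove the contrapositive-flavored statement directly: assume for contradiction that the constant algorithm $\train_q$ is tainted, and derive a contradiction from the existence of the pair $w_0, w_1$. By Definition~\ref{def:tainted}, tainted-ness means there is a single fixed user $\u$ such that for \emph{all} datasets $\D$ and all $w \in \D$, we have $\tau(\subsim(\D); \ideas(w)) > 0.99$. The key observation is that for the constant algorithm, the induced output distribution $\tau$ has a very simple form: since $p \gets \train_q(\D) = q$ deterministically regardless of $\D$, the distribution $\tau(\cdot; \aux)$ depends only on $\u$, $q$, and $\aux$---it is completely \emph{independent of $\D$}. Concretely, $\tau(w; \aux) = \Pr_{z \gets \u^q(\aux)}[z = w]$.

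First I would fix the auxiliary input to be $a := \ideas(w_0) = \ideas(w_1)$, the common ideas string guaranteed by hypothesis. Then I would apply the tainted-ness guarantee to two different singleton datasets. Taking $\D_0 = (w_0)$ with the work $w_0 \in \D_0$, tainted-ness gives $\tau(\subsim(w_0); a) > 0.99$. Taking $\D_1 = (w_1)$ with the work $w_1 \in \D_1$, and noting $\ideas(w_1) = a$ as well, tainted-ness gives $\tau(\subsim(w_1); a) > 0.99$. Because $\tau(\cdot; a)$ is the same probability distribution in both cases (it does not see $\D$), these are two statements about one single measure $\tau(\cdot; a)$ over $\W$.

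The contradiction then falls out of the disjointness hypothesis $\subsim(w_0) \cap \subsim(w_1) = \emptyset$. A single probability measure cannot assign mass greater than $0.99$ to each of two disjoint events, since by additivity
\[
    \tau\bigl(\subsim(w_0); a\bigr) + \tau\bigl(\subsim(w_1); a\bigr) \le \tau(\W; a) \le 1,
\]
whereas tainted-ness forces the left-hand side to exceed $1.98$. This contradiction shows no such user $\u$ can exist, so $\train_q$ is not tainted. I do not expect any real obstacle here; the whole argument rests on recognizing that the constant algorithm collapses the $\D$-dependence of $\tau$, which is exactly what lets us pit two different datasets against a single fixed measure. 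The only points to state carefully are that $\u$ is quantified \emph{before} $\D$ (so the same $\u$ must work for both $\D_0$ and $\D_1$), and that $\ideas(w_0) = \ideas(w_1)$ is what makes the same $\aux$ legitimate for both datasets.
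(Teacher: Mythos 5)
Your proof is correct and matches the paper's own argument: both exploit that the constant algorithm makes $\tau(\cdot;\aux)$ independent of $\D$, instantiate taintedness on the singleton datasets $(w_0)$ and $(w_1)$ with the common auxiliary input $\ideas(w_0)=\ideas(w_1)$, and derive a contradiction from the disjointness of $\subsim(w_0)$ and $\subsim(w_1)$. The paper phrases the final step as ``one of the two disjoint events has mass at most $1/2$'' rather than your ``the masses sum to more than $1.98$,'' which is the same additivity argument.
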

        \begin{proof}
            Let $\D_i = \{w_i\}$ and fix a user $\u$. Let $\tau_i$ be the user's output distribution (Definition~\ref{def:tau}) with $\D = \D_i$ and $\aux = \ideas(w_i)$.
            By construction, $\tau_1 = \tau^* = \tau_2$.
            Note that $\tau^*(w_i)\le 1/2$ for one of $i = 0,1$. Equation \eqref{eq:tainted} is violated for the corresponding $D_i$.
        \end{proof}

\subsection{Tainted NAF models}
Desideratum~\ref{des:taint} lays out a necessary condition for provable copyright protection: exclude tainted training algorithms. NAF fails this test. This is captured by the following corollaries of Theorem~\ref{thm:CP-counter-example} (slightly generalizing Theorem~\ref{thm:CP-counter-example:simplified}) and Theorem~\ref{thm:NAF-counter-example}.

\begin{corollary}[of Thm.~\ref{thm:CP-counter-example}] For any $\ideas$, the NAF algorithm $\CP^*$ given by Thm.~\ref{thm:CP-counter-example:simplified} is tainted.
\end{corollary}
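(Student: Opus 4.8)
The plan is to establish this corollary by unwinding the definition of \emph{tainted} (Definition~\ref{def:tainted}) and showing that the algorithm $\CP^*$ from Theorem~\ref{thm:CP-counter-example:simplified} satisfies it directly. Recall that Theorem~\ref{thm:CP-counter-example:simplified} guarantees that for any fixed $\ideas$, there is a training algorithm $\train^*$ such that the induced NAF algorithm $\CP^*$ produces a model $p^*$ with $p^*(c \mid \ideas(c)) = 1$ for every dataset $\D$ and every $c \in \D$ whose ideas are distinct within $\D$. The goal is to exhibit a single fixed user $\u$ witnessing the tainted condition, namely that $\tau(\subsim(\D); \ideas(w)) > 0.99$ for all $\D$ and all $w \in \D$.

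First I would take the natural candidate user: the algorithm $\u^p(\aux)$ that simply queries the model once on prompt $\aux$ and returns the resulting sample (this is exactly the user used in Example~\ref{example:tainted-kv}). Fix an arbitrary dataset $\D$ and a work $w \in \D$, and set $\aux = \ideas(w)$. When the work $w$ has ideas distinct from all other works in $\D$, Theorem~\ref{thm:CP-counter-example:simplified} gives $p^*(w \mid \ideas(w)) = 1$, so the user outputs $w$ with probability $1$. Since $w \in \subsim(w) \subseteq \subsim(\D)$ (using the assumption that each work is substantially similar to itself), we get $\tau(\subsim(\D); \ideas(w)) = 1 > 0.99$, as required.

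The main obstacle is the gap between the hypotheses: Theorem~\ref{thm:CP-counter-example:simplified} only yields the strong conclusion $p^*(c \mid \ideas(c)) = 1$ when $c$ has \emph{distinct} ideas within $\D$, whereas the tainted definition quantifies over \emph{all} $w \in \D$ without any distinctness restriction. This is precisely why the corollary is stated as following from Theorem~\ref{thm:CP-counter-example} rather than its simplified form~\ref{thm:CP-counter-example:simplified}: the full theorem in Appendix~\ref{app:sec:naf:counter-examples} presumably handles the case where several works share the same ideas (as in the key-value construction of Example~\ref{example:tainted-kv}, where $\train_\kv$ returns a uniformly random element of $D_{\ideas(w)}$, which always lands in $\subsim(\D)$). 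I would therefore route the argument through the generalized construction $\train^* = \train_\kv$: by the reasoning in Example~\ref{example:tainted-kv}, the same single-query user always outputs an element of $D_{\ideas(w)} \subseteq \D$, so $\tau(\subsim(\D); \ideas(w)) = 1$ unconditionally, establishing that $\CP^*$ is tainted with respect to $\ideas$ for every choice of $\ideas$.
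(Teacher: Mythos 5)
Your proposal is correct and matches the paper's own proof essentially verbatim: the paper also takes the single-query user $\u^p(\aux)$ that samples from $p(\cdot\,|\,\aux)$ and invokes the generalized Theorem~\ref{thm:CP-counter-example} (not the simplified form) to conclude $\tau(\subsim(\D);\,\ideas(w)) \ge \tau(\D_{\ideas(w)};\,\ideas(w)) = 1$ for all $w\in\D$, exactly as you do. Your identification of the distinct-ideas hypothesis as the gap requiring the full theorem is precisely right; the only nuance worth noting is that the unconditional conclusion $p(\D_{\ideas(w)}\,|\,\ideas(w))=1$ for the $\CP$-transformed model is the content of Theorem~\ref{thm:CP-counter-example} itself (whose appendix proof needs the full-support assumption on $\train_0$ to survive the min-and-renormalize step), not of Example~\ref{example:tainted-kv}'s reasoning about raw $\train_\kv$ alone.
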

\begin{proof}
    Consider the user $\u^{p}(\aux)$ that returns a sample from $p(\cdot | \aux)$.
    Fix $\D$ and $w\in \D$, and let $p^*\gets \CP^*(\D)$. By Theorem~\ref{thm:CP-counter-example},  $\tau(\subsim(\D);~\ideas(w)) \ge \tau(\D_{\ideas(w)};~\ideas(w)) = 1$. 
\end{proof}

\begin{corollary}[of Thm~\ref{thm:NAF-counter-example}]
    \label{cor:tainted:naf}
    For any $k$, $\ideas$, there exists a tainted algorithm $\train$ such that for all datasets $\D$, the model $p\gets \train(\D)$ is $k$-NAF with respect to $\coin_\ell$.
\end{corollary}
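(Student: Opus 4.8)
The plan is to obtain the corollary as a short reduction to Theorem~\ref{thm:NAF-counter-example}, whose construction already does all the heavy lifting. Fixing $k$, I would take $\train(\D) := p_{k,\D}$, where $(\D,k)\mapsto p_{k,\D}$ is the deterministic algorithm supplied by that theorem. The first bullet of Theorem~\ref{thm:NAF-counter-example} then gives for free that $p_{k,\D}$ is $k$-NAF with respect to $\C$ and $\coin_\ell$ with $\ell=\max\{1,\lfloor k\rfloor\}$, so the entire remaining task is to check that this $\train$ meets Definition~\ref{def:tainted}.

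For taintedness I would repackage the reconstruction user $\u_{\mathrm{rec}}$ promised by the second bullet of Theorem~\ref{thm:NAF-counter-example}, which recovers $\D$ from black-box queries to $p_{k,\D}$ with probability exceeding $0.99$. Define the tainted user $\u$ to ignore its auxiliary input, run $\u_{\mathrm{rec}}$ to reconstruct $\D$, and output a single fixed work of that reconstruction, say the first one $w_1$. The quantifier order demanded by Definition~\ref{def:tainted} is respected because this $\u$ is fixed once and for all and makes no use of $\D$ or of the particular $w$ whose ideas it is handed; in fact the construction is uniform in $\ideas$, which is even stronger than the corollary requires.

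To finish, I would observe that whenever $\u_{\mathrm{rec}}$ succeeds its output $w_1$ is an actual element of $\D$, hence $w_1\in\subsim(w_1)\subseteq\subsim(\D)$ by reflexivity of substantial similarity; and since $\train$ is deterministic, $\tau$ is computed over $\u$'s coins alone. Thus for every $\D$ and every $w\in\D$,
\[\tau\bigl(\subsim(\D);~\ideas(w)\bigr)\;\ge\;\Pr[\u_{\mathrm{rec}}\text{ succeeds}]\;>\;0.99,\]
which is exactly Equation~\eqref{eq:tainted}.

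The step I expect to require the most care is not the reduction itself but seeing why a more naive attempt fails, which clarifies what Theorem~\ref{thm:NAF-counter-example} is buying us. One cannot simply plug in the tainted $\train_\kv$ of Example~\ref{example:tainted-kv}: on prompt $\ideas(w)$ its output concentrates on a handful of works in $\D$, so $\dmax(\,\cdot\,\|\coin_\ell)$ blows up and $k$-NAF fails for every small $k$. The value of Theorem~\ref{thm:NAF-counter-example} is precisely that it diffuses the leaked data across many queries, each individually within max-KL $k$ of $\coin_\ell$; the reduction lets taintedness piggyback on that diffusion at no cost in divergence.
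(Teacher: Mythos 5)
Your proposal is correct and is essentially the paper's own argument: the paper disposes of Corollary~\ref{cor:tainted:naf} with the one-line proof ``Immediate from the statement of Thm.~\ref{thm:NAF-counter-example},'' and your reduction---taking $\train(\D)=p_{k,\D}$, wrapping the reconstruction user so it outputs a single element of the recovered dataset, and invoking reflexivity of $\subsim$ to conclude $\tau(\subsim(\D);\ideas(w))>0.99$---spells out exactly the details that make it immediate. Your closing observation about why $\train_\kv$ cannot be substituted (its max-KL divergence from $\coin_\ell$ blows up on prompts in $\ideas(\D)$) is also accurate and correctly explains why the theorem's bit-leaking construction is needed here rather than the key-value store used for the $\CP^*$ corollary.
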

\begin{proof}
Immediate from the statement of Theorem~\ref{thm:NAF-counter-example} 
\end{proof}

\section{Blameless copyright protection: a definitional framework}
\label{sec:blameless}

Our goal is to define provable copyright protection. The previous section gives a negative answer: provable copyright protection should bar tainted models. This section and those that follow give a positive answer: provable copyright protection should protect blameless users. 

We can't hope to guarantee that a generative model never reproduces copyrighted work, even works it was not trained on. Malicious users can always induce copying. (Formally, if the set of in-copyright works $\C$ is non-empty, there exists $\u$ that always infringes: $\tau(\subsim(\C);\bot)=1$.)
Instead, we wish to protect \emph{blameless} users---who don't themselves induce infringement---from unwitting copying.

This suggests a framework for defining meaningful guarantees, which we call \emph{blameless copyright protection}.
First, define a class of blameless users $\B$.
Second, guarantee that for blameless users, the probability of copying $\tau(\subsim(\C);~\aux)$ at most some small $\kappa$.

\begin{definition}[Blameless copyright protection]
    \label{def:framework}
    Fix $\kappa>0$ and a class $\B$ of \emph{blameless users}. We say $\train$ is \emph{$(\kappa,\B)$-copyright protective} if for all  blameless $\u \in \B$ and all $\aux$, $\D\in \W^*$, $\C\sub \W$:
    \[\tau(\subsim(\C);~\aux) < \kappa.\] %
\end{definition}

We place no a priori restriction on the auxiliary information $\aux$, which represents everything the user knows that is external to the model. This could include copyrighted material.
For example, Alice might be writing a children's book and also own a copy of \emph{Green Eggs and Ham}. As long as she's careful, Alice isn't doing anything wrong and should still enjoy some protection.

The missing piece is $\B$: What makes a user blameless? Different answers will yield different versions of blameless copyright protection.
So Definition~\ref{def:framework} is less a definition than a framework for definitions. Instantiating it may require additional assumptions. As we cannot perfectly capture legal blamelessness, we should err on the side of protecting more users rather than less.

In Section~\ref{sec:clean-room}, we offer a first instantiation of blameless copyright protection, inspired by clean-room design.
But there may be very different ways to formalize blamelessness, which we leave for future work. The cryptographic notion of extraction offers an intriguing approach: a user is blameworthy if a copyright-infringing work can be efficiently extracted from the user itself.

\section{Defining clean-room copyright protection}
\label{sec:clean-room}
This section  formalizes what it means to have \emph{access} to a copyrighted work (Section~\ref{sec:formalizing-access}) and  instantiates the blameless copyright protection framework (Section~\ref{sec:clean:distribution})---drawing inspiration from {clean-room design}.
In copyright law, a clean room is ``a process of producing a product
under conditions guaranteeing independent design and foreclosing the
possibility of copying''~\cite{elkins1990nec}. The clean room keeps contamination out, like keeping dust out of a semiconductor lab.
The idea comes from cases involving reverse engineering. Roughly, a team is given a description of design specs of the product (ideas) but not the product itself (expression), and tasked with producing a compatible product.  There is no {access}, as long as the team itself was not spoiled by prior familiarity with the product or its design. As such, the outcome is constructively non-infringing. Even substantial similarities will be the product of independent creation.

Clean-room design is the inspiration for a counterfactual \emph{clean-room distribution} where a user interacts with a model trained without access to certain in-copyright works (Definition~\ref{def:tau-clean}). 
We quantify the \emph{blamelessness} of a user as the probability $\beta$ that the user---in the clean-room distribution---would have produced something that is substantially similar to the excluded works (Definition~\ref{def:blameless}). \emph{Clean-room copyright protection} provides a corresponding bound $\kappa$ on the probability of copying any work in the real distribution (Define~\ref{def:clean}), where $\kappa$ depends on $\beta$.

\subsection{Scrubbing a dataset removes access}
\label{sec:formalizing-access}

A clean room is a setting where a user does not have access to a particular copyrighted work. To pin this down, we first operationalize the legal concept of access.

\paragraph{The copyright dependency graph}
We need a way to say whether a work $w'$ is a derivative of another work $w$ a copyright-relevant way. Such dependencies are directed, with later works stemming from earlier works. 
The \emph{copyright dependency graph} encodes this relation. It is assumed to capture the legal concept of access in the following sense:
\emph{For purposes of copyright law, a dataset $\D$ only gives access to a work $w$ if there is some $w'\in \D$ that stems from $w$}.

\begin{definition}[Copyright dependency graph]
The \emph{copyright dependency graph} is a directed graph $\Graph = (\W, \Edges)$ whose edges reflect dependencies relevant for copyright.
If $(w,w') \in \Edges$, we say that $w'$ \emph{stems from} $w$. We assume that every $w$ {stems} from itself: $(w,w) \in \Edges$ for all $w$.
\end{definition}

It is useful to refer to the set of all copyrighted works for which access is or isn't implied by access to a dataset $\D$. We denote these sets by $\C_\D$ and $\C_{-\D}$, respectively. For the sake of copyright analysis, access to $\D$ does not constitute access to any copyrighted $c\in \C_{-\D}$.
\begin{definition}[$\C_\D$ and $\C_{-\D}$]
    \label{def:CD}
    We define $\C_{\D} = \{c \in \C : \exists w\in \D, (c,w) \in \Edges\}$ as the set of copyrighted works from which any work in $\D$ {stems}. We denote its complement $\C_{-\D} = \C \setminus \C_\D$. 
\end{definition}

\paragraph{The $\scrub$ function}
We define the function $\scrub$ to operationalize the legal concept of {access}. 
It removes from a dataset any work that stems from a target copyrighted work $c$. That is, any $w$ for which $c\to w$ is an edge in the copyright dependency graph.
The result is a new dataset $\scrub(\D,c)\sub \D$. For the sake of copyright analysis, access to $\scrub(\D,c)$ does not constitute to access to $c$.
\begin{definition}[$\scrub$]
    Fix copyright dependency graph $\Graph = (\W, \Edges)$, dataset $\D$ and work $c$. The dataset $\scrub(\D,c) = \left(w \in \D ~:~ (c,w) \not\in \Edges\right)$ is the sub-dataset of $\D$ of works not stemming from $c$.
\end{definition}
Observe that $\C_\D = \{c\in \C: \scrub(\D,c) \neq \D\}$ and $\C_{-\D} = \{c\in \C: \scrub(\D,c) = \D\}$

\subsection{Clean training algorithms: copyright protection for blameless users in a clean room}
\label{sec:clean:distribution}

This section gives a clean-room inspired formulation of copyright protection. Informally, we say that a training algorithm is \emph{$(\kappa,\beta)$-clean} if for all users $\u$: either (i) $\u$ would have copied in a clean-room setting with probability at least $\beta$; or (ii) $\u$ copies in the real world with probability at most $\kappa$. 
Making this precise, we define the user's \emph{clean-room output distribution} (Definition~\ref{def:tau-clean}), \emph{blameless users in the clean room} (Definition~\ref{def:blameless}), and \emph{$(\kappa,\beta)$-clean-room copyright protection} (Definition~\ref{def:clean}).

\subsubsection{The user's clean-room distribution}
We define a user's \emph{clean-room  distribution}, a counterfactual to its true distribution $\tau$.
For a copyrighted work $c$, we denote by $\tau_{-c}$  the user's output distribution in a clean room where the model doesn't depend on $c$.
The only difference from the user's real-world output distribution $\tau$ (Definition~\ref{def:tau}) is that the model is trained on $\scrub(\D,c)$ instead of $\D$.
\begin{definition}[User's clean-room distribution]
    \label{def:tau-clean}
    For user $\u$, training algorithm $\train$, dataset $\D$, work $c$, and auxiliary information $\aux$, we define the user's \emph{clean-room distribution} $\tau_{-c}$ for $c$ as:
    \[\tau_{-c}(w;\aux) = \Pr_{\substack{p \gets \train(\scrub(\D,c))\\z \gets \u^{p}(\aux)}}[z=w].\]
\end{definition}
This is closely related to NAF. The model $p\gets \train(\scrub(\D,c))$ is NAF's safe model $\safe_c$ for an appropriately defined function $\safe$. The clean-room distribution $\tau_{-c}$ is the distribution over outputs that results when the user given $\aux$ interacts with $\safe_c$.

\subsubsection{Clean-room blamelessness and copyright protection}
\label{sec:clean:blameless}
\newcommand{\clean}{\mathsf{clean}}
We postulate that \emph{blameless users can control their risk of copying when using a model trained in a clean room}. 
Given $\beta>0$, a blameless user can guarantee that they copy a work $c$ that was ``outside the clean room'' with probability at most $\beta$.
This should hold even if the user is exposed to $c$ in some other than through the model.
People are constantly exposed to copyrighted works, yet we somehow to manage to avoid copying every day. Using a model trained without access to $c$ shouldn't change that.

We call these users \emph{$\beta$-blameless in a clean room}, or \emph{$\beta$-blameless} for short.
Definition~\ref{def:blameless} formalizes this idea. 
Recall that $\C_{-\D}$ is the set of copyrighted works from which no element of $\D$ stems (Definition~\ref{def:CD}).
Thus, the set of works ``outside the clean room'' is $\C_{-\D}\cup\{c\}$. 

\begin{definition}[Blameless in the clean room ($\beta$-blameless)]
    \label{def:blameless}
    For $0\le \beta \le 1$, a user $\u$ is \emph{$\beta$-blameless in the clean room ($\beta$-blameless)} with respect to $\D$, $\C$, $\train$ if for all $c \in \C$ and all $\aux$:\footnotemark
        \[\tau_{-c}\biggl(\subsim\bigl(\C_{-\D} \cup \{c\}\bigr);~\aux\biggr) \le \beta.\]
    Otherwise, $\u$ is $\beta$-blameworthy.\footnotemark
\end{definition}
\footnotetext{
        Equivalently, $\B^{\clean}_\beta(\D,\C,\ideas) := \{\u : \forall c\in\C, \forall\aux, \tau_{-c}\bigl(\subsim\bigl(\C_{-\D} \cup \{c\}\bigr);~\id\bigr) \le \beta\}.$
        }

Instantiating our framework (Definition~\ref{def:framework}), we now define \emph{clean training algorithms} as those that provide copyright protection to users who are blameless in the clean room.
\begin{definition}[Clean-room copyright protection; $(\kappa,\beta)$-clean]
    \label{def:clean}
    For $\kappa, \beta > 0$, we say $\train$ is \emph{$(\kappa,\beta)$-clean} if for all $\C\sub \W$, $\D\in \W^*$, all users $\u$ who are $\beta$-blameless (w.r.t.\ $\C,\D,\train$), and all $\aux$:
    \[\tau(\subsim(\C);~\aux) \le \kappa.\]
    If this only holds for datasets $\D$ in an admissible set  $\DD\sub \W^*$, we say $\train$ is \emph{$(\kappa,\beta)$-clean} for $\DD$. 
\end{definition}

\begin{remark}[How small is $\beta$?]
    \label{rem:beta}
    We think of $\beta$ as the probability that a ``truly blameless'' user (whatever that means) nevertheless produces something substantially similar to a copyrighted work. There is a  limit to how small this can be. Even a monkey on a typewriter---truly blameless if anyone is---has a non-zero $\beta$.

    Getting a good estimate of achievable $\beta$ is beyond the scope of this work. It should be small enough to be very unlikely to occur by chance: $\beta \ll 1/100$ is certainly doable. It should be large enough to reflect that a relatively small amount of original expression is needed for copyright protection: $\beta \gg 1/10^{100}$ is impossible, even for the monkey.
    We conjecture that $\beta \approx 10^{-6}$ or $10^{-9}$ is achievable by a conscientious user. 
\end{remark}

\subsubsection{Clean-room copyright protection is not tainted}
By Desideratum~\ref{des:taint}, a good definition of provable copyright protection should exclude tainted training algorithms. 
The following theorem shows that clean-room copyright protection passes this test, under a reasonable assumption on $\ideas$ and $\subsim$. Roughly, the assumption is that there exist at least $n$ distinct ideas each of which can be expressed in $m\gg n$ ways satisfying a strong dissimilarity property: that for any distinct $w$ and $w'$, the set of works substantially similar to both $w$ and $w'$ is empty.  
\begin{theorem}
    \label{thm:tainted-unclean}
    Fix $\beta<1$, $n\ge 1$, and $m\ge\frac{n}{\beta}$. 
    Suppose there exists works $W = \left(w_{i,j}\right) \in \W^{m\times n}$ such that for all $i \neq i' \in [m]$ and all $j \in [n]$:
    \[\ideas(w_{i,j}) = \ideas(w_{i',j}) \quad\text{and}\quad \subsim(w_{i,j}) \cap \subsim(w_{i',j}) = \emptyset.\]
    If $\train$ is tainted with respect to $\ideas$, then $\train$ is not $(\kappa,\beta)$-clean.
\end{theorem}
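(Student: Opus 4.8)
The plan is to exhibit, for a tainted $\train$, a concrete dataset, copyright set, and blameless user that together copy verbatim in the real world, so that $\train$ cannot be $(\kappa,\beta)$-clean for any $\kappa<0.99$. I would build the dataset from the hypothesized matrix $W=(w_{i,j})$ by selecting one work per column: fix rows $\vec{i}=(i_1,\dots,i_n)$, set $\D_{\vec i}=(w_{i_1,1},\dots,w_{i_n,n})$, take $\C=\D_{\vec i}$ (distinct works), and assume the dependency graph carries only the mandatory self-loops on these works, so that $\scrub(\D_{\vec i},w_{i_j,j})=\D_{\vec i}\setminus\{w_{i_j,j}\}$ and $\C_{-\D}=\emptyset$. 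With $\C=\D_{\vec i}$ we have $\subsim(\D_{\vec i})\subseteq\subsim(\C)$, so the tainted user $\u_T$ guaranteed by Definition~\ref{def:tainted}, run on $\aux=\ideas(w_{i_j,j})$, gives $\tau(\subsim(\C);\aux)\ge\tau(\subsim(\D_{\vec i});\aux)>0.99$. Real-world copying therefore holds for every choice of rows; the entire difficulty is to produce a single $\vec i$ for which the copying user is $\beta$-blameless.

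For blamelessness, the key structural fact is that the clean-room model for $c=w_{i_j,j}$, namely $\train(\scrub(\D_{\vec i},w_{i_j,j}))$, does not depend on the row $i_j$ (that work is scrubbed), only on the other rows $\vec i_{-j}$. Since $\C_{-\D}=\emptyset$, I must bound $\tau_{-c}(\subsim(w_{i_j,j});\aux)$. Fixing $\vec i_{-j}$ and any $\aux$, the user's clean-room output is a fixed measure $Q_\aux$ independent of $i_j$; the dissimilarity hypothesis $\subsim(w_{i,j})\cap\subsim(w_{i',j})=\emptyset$ makes the sets $\{\subsim(w_{i,j})\}_{i\in[m]}$ pairwise disjoint, so $\sum_{i\in[m]}Q_\aux(\subsim(w_{i,j}))\le 1$. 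Averaging over a uniformly random row, $\E_{i_j}\bigl[Q_\aux(\subsim(w_{i_j,j}))\bigr]\le 1/m\le\beta/n$, and Markov's inequality gives $\Pr_{i_j}\bigl[Q_\aux(\subsim(w_{i_j,j}))>\beta\bigr]\le 1/(m\beta)\le 1/n$. Drawing $\vec i$ uniformly and union-bounding this failure event over the $n$ columns shows that with positive probability no column copies in its clean room, so some fixed $\vec i$ makes the user $\beta$-blameless. This is exactly where $m\ge n/\beta$ is used.

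The main obstacle is that Definition~\ref{def:blameless} quantifies over all $\aux$, so I cannot simply reuse $\u_T$: an adversarial $\aux$ could carry the raw expression of the scrubbed work $c$ and fool a poorly-behaved tainted user into emitting $c$ even though the clean-room model never saw it. The fix is to make the copying user a pure model-extractor $\u^*$ that treats its auxiliary input as an idea and uses $\u_T$ only to pull out of the model the works matching that idea. On $\aux=\ideas(w_{i_j,j})$ this recovers $w_{i_j,j}$ from the real-world model (taintedness), yielding the bound above; but on every $\aux$ the clean-room output is a function of the row-independent scrubbed model, so the counting bound applies uniformly. The same-idea hypothesis $\ideas(w_{i,j})=\ideas(w_{i',j})$ is essential here: all $m$ rows of a column share one idea, so there is effectively a single meaningful idea-query per column, while any $\aux$ carrying raw expression corresponds to no idea present in the scrubbed data and extracts nothing. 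Assembling these pieces—real-world copying from taintedness, row-independence of the scrubbed model, disjointness-based counting, and the probabilistic choice of $\vec i$—contradicts $(\kappa,\beta)$-cleanness and proves the theorem.
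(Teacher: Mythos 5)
Your overall architecture matches the paper's proof: pick one work per column to form $\D$, set $\C=\D$ (so $\C_{-\D}=\emptyset$), choose a self-loops-only dependency graph so that the clean-room model for column $j$ is independent of the row $i_j$, use within-column disjointness of $\subsim$ plus a counting argument to find a row selection making the user $\beta$-blameless, and invoke taintedness for the real-world violation. But there is a genuine gap at precisely the step you flag as ``the fix'': your user $\u^*$ \emph{uses} its auxiliary input, while Definition~\ref{def:blameless} quantifies over all $\aux$. For each \emph{fixed} $\aux$ your counting bound excludes fewer than $1/\beta$ rows of column $j$, but the excluded set depends on $\aux$, and the failure event for column $j$ is a union over every $\aux$ on which $\u^*$ acts nontrivially. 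Your claim that there is ``effectively a single meaningful idea-query per column'' does not hold: blamelessness for $c=w_{i_j,j}$ must also hold against $\aux$ equal to the idea of every \emph{other} column $j'$, and the hypothesis gives disjointness of $\subsim$ only within a column, not across columns, so the output of the tainted user on another column's idea may land in $\subsim(w_{i_j,j})$ with probability above $\beta$. (Similarly, ``any $\aux$ carrying raw expression \ldots extracts nothing'' does not follow from Definition~\ref{def:tainted}, which constrains the tainted user only on $\aux\in\ideas(\D)$ run against the real model; since you design $\u^*$ yourself you could patch that case by outputting a fixed work, but the multiplicity of idea-queries remains.) Even after restricting to the $n$ column-ideas, the union bound degrades to $n^2/(m\beta)$, which requires $m\ge n^2/\beta$ --- strictly stronger than the theorem's $m\ge n/\beta$.

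The paper closes this hole by hard-coding a \emph{single} idea: its witness user simulates the tainted user on the fixed input $\id=\ideas(w_{1,1})$ and ignores $\aux$ entirely, so there is exactly one clean-room measure per scrubbed dataset and the per-column counting applies verbatim (Lemma~\ref{lemma:tainted:bad-d} explicitly assumes $\u^p(\aux)=\u^p(\bot)$); the real-world violation is then witnessed by the column-$1$ work that the lemma guarantees lies in the constructed dataset, whose idea equals the hard-coded $\id$. Separately, note a quantitative wrinkle even in the aux-free version of your argument: at $m=n/\beta$ your Markov-plus-union bound gives failure probability at most $n\cdot\frac{1}{m\beta}=1$, not strictly less than $1$, so as written it does not produce a good $\vec{i}$. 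You would need to observe that the number of rows with clean-room mass strictly exceeding $\beta$ is strictly below $1/\beta$, or adopt the paper's argmax/pigeonhole device --- partition $[m]^n$ by the maximizing column $j^*(x)=\argmax_j p_j(x)$, pass to a fiber agreeing on all coordinates but $j^*$ (on which the scrubbed dataset, hence the clean-room measure, is constant), and pay the counting bound only in that one column --- which avoids the factor-$n$ union bound altogether.
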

The proof is deferred to Appendix~\ref{app:thm:tainted-unclean}.

\section{Differential privacy's protection against copying}
\label{sec:DP}

Many works have suggested that differential privacy~\cite{dwork2006calibrating} (DP) should protect against copying if the training data are deduplicated~\cite{bousquet2020synthetic,henderson2023foundation,VyasKB23,elkinkoren2024copyright,chen2024randomization,livni2024credit}. This section turns the suggestion into a theorem. For background on DP, see Appendix~\ref{app:DP:prelims}.

At a high level, we show that DP training protects blameless users from copyright infringement (in the sense of Definition~\ref{def:clean}) if the training dataset is \emph{golden}, a copyright-deduplication condition defined below.
If so, the probability of copying is at most $\approx e^\eps\beta N_\D$, where $\beta$ is a user's probability of copying in the clean room and $N_\D = |\C_\D|$ is the number of copyrighted works to which $\D$ grants access. 
To guarantee the risk is at most $\kappa$, the user sets $\beta \approx {\kappa}/{e^\eps N_\D}$. 
The linear dependence on $N_\D$ makes this result of mostly theoretical interest, but it may sometimes be good enough or be further improved under additional assumptions on $\subsim$ and $\ideas$ (Remark~\ref{remark:DP:linear-N}).

This allows a user choose their appetite for copyright risk $\kappa$ and then act to set $\beta$ accordingly. Stephen King writing his next bestseller has a much lower risk tolerance ($\kappa = 10^{-9}$) than Joe Schmoe writing a wedding toast ($\kappa =10^{-1}$). Most users are somewhere in between. For $\eps = 5$ and $N_D = 200,000$ Stephen King needs $\beta \approx 10^{-15}$ while Joe Schmoe only needs $\beta \approx 10^{-6}$. If $\beta \approx 10^{-15}$ is too small (Remark~\ref{rem:beta}),  Stephen King can instead forgo using the LLM altogether.

A dataset $\D$ is \emph{golden} if at most one item $w\in \D$ stems from any copyrighted work $c$.\footnotemark~
That item could be $c$ itself or a derivative work.
No protected original expression appears in more than one element of a golden dataset.
For example, it can contain one copy
or parody of the \emph{Abbey Road} album cover (not both), and many parodies of the out-of-copyright \emph{Mona Lisa}. 
Remark~\ref{remark:golden-data-over-time} explains why golden datasets will typically remain golden as the set of in-copyright works evolves over time.

\begin{definition}[Golden dataset]
    Let $\Graph = (\W,\Edges)$ be a copyright dependency graph, and $\C\sub\W$ be the set of in-copyright works. A dataset $\D$ is \emph{golden} (with respect to $\Graph$, $\C$) if for all $c\in \C$ there is at most one $w\in \D$ such that $(c,w) \in \Edges$. 
\end{definition}
    \footnotetext{%
        We adapt and formalize this idea from \cite{VyasKB23}, where it appears as an informal condition that suffices for their analysis: 
        ``It is important that when we omit a data point $x$ it does not share copyrighted content with many other data points that were included in the training set.''
        We also borrow the term ``golden dataset'', though \cite{VyasKB23} uses it to refer to something entirely different: A dataset that is ``carefully scrutinized to ensure that all material in it is not copyrighted or [is] properly licensed.'' That condition doesn't suffice, as even licensed derivates could cause unlicensed copying by a generative model. The same is true for NAF, despite their suggestion that it would yield a safe model.
    }
\begin{theorem}\label{thm:dp}
    Let $\train$ be $(\eps,\delta)$-differentially private for $\eps>0,\delta\ge0$. Let $N_D = |\C_\D|$. Then $\train$ is $(\kappa,\beta)$-clean for golden datasets $\D$, for all $\beta\ge 0$ and $\kappa \ge (e^\eps N_\D  + 1)\beta + N_\D \delta$.
\end{theorem}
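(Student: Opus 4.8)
The plan is to fix a golden dataset $\D$, a $\beta$-blameless user $\u$, and an auxiliary input $\aux$, and then bound $\tau(\subsim(\C);\aux)$ by splitting $\C$ into the works $\C_\D$ to which $\D$ grants access and the remaining works $\C_{-\D}$ (Definition~\ref{def:CD}). Since $\subsim(\C) = \subsim(\C_\D)\cup\subsim(\C_{-\D})$, a union bound gives $\tau(\subsim(\C);\aux) \le \tau(\subsim(\C_\D);\aux) + \tau(\subsim(\C_{-\D});\aux)$, and I would bound the two terms separately, aiming for $e^\eps N_\D\beta + N_\D\delta$ and $\beta$ respectively.

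The term for $\C_{-\D}$ is essentially free. For any $c\in\C_{-\D}$ we have $\scrub(\D,c)=\D$, which is exactly the defining property of $\C_{-\D}$, so the clean-room distribution $\tau_{-c}$ coincides with the real-world distribution $\tau$. Picking any such $c$ (or noting the term is $0$ if $\C_{-\D}=\emptyset$) and using that $\subsim(\C_{-\D})\subseteq\subsim(\C_{-\D}\cup\{c\})$, blamelessness (Definition~\ref{def:blameless}) directly yields $\tau(\subsim(\C_{-\D});\aux)=\tau_{-c}(\subsim(\C_{-\D});\aux)\le\beta$.

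The term for $\C_\D$ is where differential privacy enters. For each $c\in\C_\D$, the key observation is that because $\D$ is golden, $\scrub(\D,c)$ removes at most one element from $\D$, so $\D$ and $\scrub(\D,c)$ are neighboring datasets. I would then argue that the map sending a dataset $\D'$ to the output $z\gets\u^{\train(\D')}(\aux)$ is a (randomized) post-processing of $\train(\D')$: the user sees $\train(\D')$ only through black-box queries, so its output is a function of the trained model together with the user's own coins. Since $(\eps,\delta)$-DP is preserved under post-processing, for every event $S$ we get $\tau(S;\aux)\le e^\eps\tau_{-c}(S;\aux)+\delta$. Taking $S=\subsim(c)$ and invoking blamelessness (which bounds $\tau_{-c}(\subsim(c);\aux)\le\tau_{-c}(\subsim(\C_{-\D}\cup\{c\});\aux)\le\beta$) gives $\tau(\subsim(c);\aux)\le e^\eps\beta+\delta$ for each of the $N_\D$ works $c\in\C_\D$. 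A final union bound over $\C_\D$ yields $\tau(\subsim(\C_\D);\aux)\le N_\D(e^\eps\beta+\delta)$.

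Combining the two bounds gives $\tau(\subsim(\C);\aux)\le (e^\eps N_\D+1)\beta + N_\D\delta\le\kappa$, as required. I expect the main obstacle to be the post-processing step: I must make sure that the user's entire interaction---possibly many adaptive black-box queries plus internal randomness---genuinely factors through the trained model $p$, so that DP's post-processing closure applies to the end-to-end map $\D'\mapsto z$ and not merely to $\train$ in isolation. Once that is established, the remainder is the golden/neighboring observation and two union bounds.
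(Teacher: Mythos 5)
Your proposal is correct and takes essentially the same route as the paper's proof: the same split of $\C$ into $\C_\D$ and $\C_{-\D}$ via a union bound, the same observation that goldenness makes $\D$ and $\scrub(\D,c)$ equal or neighboring so that DP plus post-processing gives $\tau(E;\aux)\le e^\eps\,\tau_{-c}(E;\aux)+\delta$, and the same invocation of $\beta$-blamelessness for both terms (the paper packages the $\C_{-\D}$ step as Proposition~\ref{prop:blameless-real-world}). The post-processing worry you flag is exactly what the paper relies on---closure of DP under post-processing in the presence of arbitrary auxiliary information (Appendix~\ref{app:DP:prelims})---so the end-to-end map $\D'\mapsto z$ is covered and no gap remains.
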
 
\begin{proof}
    Fix $\C\sub \W$, $\D$, and user $\u$ that is $\beta$-blameless (Def.~\ref{def:blameless}).
    We must show  that for all $\aux$, $\tau(\subsim(\C);~\aux)\le\kappa$. 
    
    Because $\D$ is golden, $|\D \setminus \scrub(\D,c)| \le 1$. Hence, datasets $\scrub(\D,c)$ and $\D$ are either equal or neighboring for all $c\in \C$.
    Applying DP and post-processing, we have $\tau(E) \le e^\eps\cdot \tau_{-c}(E) + \delta$ for all events $E$ and all $c\in \C$.
\begin{align*}
    \tau(\subsim(\C);~\aux) 
    &\le 
    \tau(\subsim(\C_{-\D});~\aux)+\sum_{c \in \C_\D} \tau(\subsim(c);~\aux) &\text{(union bound)}\\
    &\le \beta + N_\D\delta + 
    e^\eps \cdot \sum_{c \in \C_\D} \tau_{-c}(\subsim(c);~\aux)  &\text{(DP; Prop.~\ref{prop:blameless-real-world} below)}\\
    &\le
    (e^\eps N_\D  + 1)\beta + N_\D \delta &\text{($\beta$-blameless)} \\
    &\le \kappa  &\qedhere 
\end{align*}
\end{proof}

In the proof above, we need to bound the probability $\tau(\subsim(\C_{-\D});~\aux)$ of similarity with some work in $\C_{-\D}$ in the real distribution $\tau$. Blamelessness only gives us a bound in the clean-room distribution $\tau_{-c}$. Proposition~\ref{prop:blameless-real-world} shows that the latter implies the former.

\begin{proposition}
    \label{prop:blameless-real-world}
Let $\u$ be $\beta$-blameless w.r.t.\ $\D$, $\C$, $\train$. For all $\aux$:  $\tau\left(\C_{-\D};~\aux\right) \le \beta.$
\end{proposition}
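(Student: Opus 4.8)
The plan is to exploit the defining property of $\C_{-\D}$ recorded just after the $\scrub$ definition: $\C_{-\D} = \{c\in\C : \scrub(\D,c) = \D\}$. In words, for any $c \in \C_{-\D}$, scrubbing $c$ leaves the dataset untouched. This identity is the crux of the argument, because it forces the counterfactual clean-room training process and the real-world training process to be \emph{literally the same} process whenever $c\in\C_{-\D}$. All that remains is to transfer the blamelessness bound, which is stated about the clean-room distribution $\tau_{-c}$, over to the real distribution $\tau$.

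First I would dispatch the degenerate case: if $\C_{-\D} = \emptyset$, then $\subsim(\C_{-\D}) = \emptyset$ and $\tau(\subsim(\C_{-\D});\aux) = 0 \le \beta$, so the claim holds (and a fortiori for $\tau(\C_{-\D};\aux)$). Otherwise, fix any $c \in \C_{-\D}$ and any $\aux$. Since $\scrub(\D,c) = \D$, the sampling processes in Definitions~\ref{def:tau} and~\ref{def:tau-clean} are identical, so $\tau_{-c}(\cdot\,;\aux) = \tau(\cdot\,;\aux)$ as distributions over $\W$. In particular $\tau(\subsim(\C_{-\D});\aux) = \tau_{-c}(\subsim(\C_{-\D});\aux)$.

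Next I would invoke blamelessness. Because $\C_{-\D} \subseteq \C_{-\D}\cup\{c\}$ and $\subsim$ is monotone (it is a union of $\subsim(\cdot)$ over the set), we get $\subsim(\C_{-\D}) \subseteq \subsim(\C_{-\D}\cup\{c\})$, hence $\tau_{-c}(\subsim(\C_{-\D});\aux) \le \tau_{-c}(\subsim(\C_{-\D}\cup\{c\});\aux)$. Since $\u$ is $\beta$-blameless (Definition~\ref{def:blameless}) and $c\in\C$, the right-hand side is at most $\beta$. Chaining through the equality and inequalities yields $\tau(\subsim(\C_{-\D});\aux) \le \beta$. As $\C_{-\D}\subseteq\subsim(\C_{-\D})$ (each work is substantially similar to itself), the literal statement $\tau(\C_{-\D};\aux)\le\beta$ follows immediately, while the stronger form with $\subsim$---which is what the proof of Theorem~\ref{thm:dp} actually consumes in its first union-bound line---is exactly what we have shown.

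The only real obstacle here is conceptual rather than computational: one must notice that $c$ lying \emph{outside} the dataset's access ($c\in\C_{-\D}$) collapses the clean-room counterfactual onto the real world, giving the identity $\tau_{-c}=\tau$. Once that identity is in hand, the result is a one-line consequence of the monotonicity of $\subsim$ and the definition of blamelessness; no divergence estimates or properties of $\train$ are needed.
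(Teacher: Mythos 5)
Your proof is correct and follows essentially the same route as the paper's: both hinge on the observation that $c\in\C_{-\D}$ forces $\scrub(\D,c)=\D$, hence $\tau_{-c}=\tau$, and then apply the $\beta$-blamelessness bound (the paper uses the equality $\C_{-\D}=\C_{-\D}\cup\{c\}$ where you use the containment, a trivial variant). Your explicit remark that the argument in fact yields the stronger bound $\tau(\subsim(\C_{-\D});\aux)\le\beta$---the form actually consumed in the union-bound step of Theorem~\ref{thm:dp}---is a small but accurate sharpening of what the proposition's statement records.
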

\begin{proof}
By definition of $\C_{-\D}$, $\scrub(\D,c) = \D$ and hence $\tau_{-c} = \tau$. Also, $\C_{-\D} =\C_{-\D}\cup \{c\}$. Thus, $\tau(\C_{-\D}; \aux) \le \tau_{-c}\left(\subsim\left(\C_{-\D} \cup \{c\}\right);\aux\right) \le \beta$, with the last inequality by blamelessness.
\end{proof}

\begin{remark}[On the linear dependence on $N_D$]\label{remark:DP:linear-N}
    Notice that $\kappa > N_D\cdot(\beta + \delta)$ grows linearly with $N_D = |\C_\D|$. This only yields a meaningful bound if $N_D \ll 1/(\beta+\delta)$. Along with the difficulty of creating enormous golden datasets (Section~\ref{sec:discussion}), this is another reason that our approach is most practical when only medium-sized datasets are required (i.e., tens of thousands of items)

    On the other hand, the factor of $N_D = |\C_\D|$ from the union bound is unreasonably pessimistic, and it should be possible to greatly improve it by imposing additional assumptions on $\subsim$ and $\ideas$. Ideas sometimes differ so greatly that the risk of substantial similarity is essentially 0: a photorealistic bird will never resemble Dr.\ Seuss's {Cat in the Hat}. If the user is targeting a fixed set of ideas $\id$ (conditioning $\tau$ on $\id$), one might instead take the union bound over $\C_\D^\id = \{c \in \C_\D : \id \in \ideas(\subsim(c))\}$. We expect that typically $N_D^\id := |\C_\D^\id| \ll |\C_\D| = N_D$. 
\end{remark}

    \begin{remark}[Golden datasets remain golden over time]\label{remark:golden-data-over-time}
        Theorem~\ref{thm:dp} states that $(\eps,\delta)$-differentially private models are $(\kappa,\beta)$-clean for \emph{golden datasets} (and appropriate $\kappa$ and $\beta$). The set of golden datasets $\DD$ depends on the set of in-copyright works $\C$. For example, every $\D$ is golden when nothing is in-copyright ($\C = \emptyset$).
        
        This presents a challenge: Will a golden dataset remain golden as the set of works protected by copyright evolves? As copyrights expire and new works are created, the sets of in-copyright works today $\C$ and tomorrow $\C'$ will diverge. For Definition~\ref{def:clean} to offer lasting protection, we need $\DD \sub \DD'$.

        Fortunately, it will typically be true that datasets remain golden as the set of in-copyright works evolves.
        This requires three observations.  
        First, expiring copyrights don't affect goldenness. If $\D$ is golden for $\C$, then $\D$ is golden for $\C \cap \C'$.
        Second, a work is afforded copyright protections as soon as it is fixed in a tangible medium. This means that if a work currently exists will ever be in-copyright (in $\C'$), then it already in-copyright (in $\C$). Hence, $\C'\setminus \C$ contains only works that do not yet exist. The exception is when a work becomes newly eligible for copyright protections, and the change in eligibility is applied retroactively. For example, if a court extends copyright to a new class of works.
        Third, no work can stem from a work created later. Combined with the previous observation, $\D$ is golden with respect to $\C'\setminus \C$. 
        Putting it all together, we have that $\D \in \DD$ implies that $\D$ is golden with respect to $(\C \cap \C') \cup (\C' \setminus \C) = \C'$. That is, $\D \in \DD'$.
    \end{remark}

\section{Discussion}
\label{sec:discussion}

\subsection{An indemnification policy for genAI outputs}
People copy without generative models, and will continue copying with them.
One way to evaluate the usefulness of a copyright-mitigation measure is to consider what happens when copying does occur. Who should pay? 

We are not asking where the liability would lie under existing law, but where it should lie. When generative AI is involved, holding a user strictly liable for all infringement is neither just nor effective. Punishment cannot incentivize users to avoid copying when avoiding copying is beyond their control. %

It is hard to assign culpability on the basis of NAF alone, as tainted NAF models can induce a user acting appropriately to infringe.
Clean-room copyright protection gives a clearer theoretical account of who is culpable if training is differentially private.  The provider is culpable if the data wasn't golden; the user is culpable if not blameless; possibly both are culpable, or neither. However, this test is impossible to apply as blamelessness cannot in general be checked.

Even so, a simple indemnification policy is possible: {The model provider pays for infringement if the copied work appears too often in the training data.} This is a question of fact that legal process and forensic experts are well-suited to resolve. Experts for each party can examine the training data and testify before the finder of fact. 

This indemnification policy gives users what they want. Users are indemnified when there is any possibility that the model is to blame, and they can control their risk tolerance by tuning $\beta$.
The policy also gives model providers what they want. Besides offering a valuable protection for users, the provider can trade off their overall exposure to copyright penalties with their effort spent cleaning the data. Perhaps the provider can even pass the liability to third-party data providers contracted to provide golden data.

Many generative AI services already indemnify their users against copyright liability, with restrictions on user conduct. For example, OpenAI's terms do not apply where the ``Output was modified, transformed, or used in combination with products or services not provided by or on behalf of OpenAI'' or where the user ``did not have the right to use the Input or fine-tuning files to generate the allegedly infringing Output.''\footnote{\url{https://openai.com/policies/service-terms/} (accessed June 4, 2025).} These are sensible restrictions. However, deciding whether the indemnification applies could require a complex forensic reconstruction and analysis of the creation of the infringing artifact, including how third-party tools were used and the role of the user's prompts or fine-tuning. In contrast, our indemnification policy imposes no restrictions on the user and reduces the question of liability to a clear question of fact (albeit with the significant drawbacks of requiring golden datasets and differential privacy).

\subsection{Is clean room protection practical?}
Theorem~\ref{thm:dp} requires DP models, golden data, and blameless users. Is it practical?

Making a golden dataset is much harder than deduplication (already a major challenge~\cite{lee2021deduplicating}): it depends on squishy legal standards and on data outside the dataset. 
Still, creating a golden dataset with tens of thousands of items seems feasible but expensive. One only needs to be able to determine if two items stem from a common in-copyright work (i.e., siblings in the copyright dependency graph). For a given pair of items, doing so with confidence is plausible with appropriate domain expertise.
Another approach would be to create or commission new work with known copyright dependencies. 

Constructing a golden dataset large enough to pre-train a foundation model is hopeless. Applications with more modest data needs are more promising, fine-tuning a pre-trained model for example. Even with differential privacy, tens of thousands of training data can achieve good utility.

There are non-technical ways to address the challenge of golden datasets. Suppose the trainer license the data from a data provider. They could require metadata to include copyright dependencies, or that the data provider provide golden data. Either way, the license agreement can indemnify users if the data provider's failure to appropriately clean or tag the data leads to infringement.
Not all is lost if the data is imperfect. The guarantees will still hold for any work satisfying the golden condition.

Blamelessness presents a greater challenge: it is not checkable, not even by the user.
Still, we believe that diligent users who are attentive to the possibility of inadvertent infringement can guarantee $\beta$-blamelessness for $\beta$ small enough to make Theorem~\ref{thm:dp} meaningful. 

This is the crux of the matter. Why do I think that users can avoid being unduly influenced by works to which they have been exposed? I don't have a good answer.
More than anything, I can't shake the belief that I could do it. That I could productively use ChatGPT-4o to produce a story or image that is wholly original, bearing no resemblance even to stories and images with which I am intimately familiar.
People are able to be truly creative, despite constant exposure to copyrighted work. 
Somehow you and I manage to avoid copying every day.

\section*{Acknowledgements}
    We are indebted to Mayank Varia for many helpful discussions and encouragement. We thank Gautam Kamath, Yu-Xiang Wang, Seewong Oh, and especially Thomas Steinke for early discussions when the idea of this paper was still taking shape. Theorem~\ref{thm:CP-counter-example:simplified} is based on an observation of Thomas. We thank Sarah Scheffler, Randy Picker, Lior Strahilevitz, James Grimmelmann, anonymous reviewers, and attendees of the 2024 Works-in-Progress Roundtable on Law and Computer Science at the University of Pennsylvania for feedback on an early draft.

\bibliographystyle{alpha}
\bibliography{refs}

\appendix

\section{Are we trying to reduce copyright to privacy? No!}
\label{sec:niva}    
Elkin-Koren, Hacohen, Livni, and Moran argue that copyright cannot be ``reduced to privacy.''  Referring to both NAF and DP by umbrella term algorithmic stability, they argue that the sort of provable guarantees that \cite{VyasKB23} and this paper seek do not capture copyright's complexities.
\begin{quote}
Algorithmic stability approaches, when used to establish proof of copyright infringement are either too strict or too lenient 
from a legal perspective. Due to this misfit, applying algorithmic stability approaches as filters for generative models will likely to distort the delicate balance that copyright law aims to achieve between economic incentives and access to creative works.
\end{quote}
Too strict by excluding permitted uses of copyrighted data: (1) works in the public domain; (2) unprotected aspects of copyrighted work (e.g., ideas, facts, procedures); and (3) lawful uses of copyrighted work, especially fair use.
Too lenient when protected expression originating in one work is present in many other works in a training data set. For example, copies, derivatives, or snippets of the original (whether fair use or not) would undermine any NAF- or DP-based guarantee if unaccounted for. 

We completely agree, and suspect that \cite{VyasKB23} would too. The claim that ``copyright can be reduced to privacy'' is a straw man. It would be foolish to suggest that the whole of copyright law for generative AI be governed my a mathematical formalism like NAF or DP. That doesn't mean that a mathematical formalism offering legal guarantees isn't worthwhile---as a thought experiment or as a first step towards practical solutions.

Still, a formalism that is both too lenient and too strict won't support a legal conclusion either way. As we cannot reduce copyright to a mathematical formalism, we must choose how to err. In this work, we seek sufficient conditions for preventing infringement. Conditions that are too strict will leave room for improvement, but won't be fatally flawed.  But conditions that are too lenient would be fatal---they would not suffice. Thus, we must address the possibility that one work's copyrighted expression appears in many other works.

\section{Deferred material on near access-free models (Section~\ref{sec:naf})}
\label{app:NAF}

This section presents our detailed treatment of near access-free models.
It describes near access-freeness and its limitations. 
We prove that models can enable verbatim copying while still satisfying NAF. 
While NAF provides protection against a single prompt that is independent of the training data, it makes no guarantees against many prompts~\cite{li2024va3}, nor a single prompt derived from non-copyrightable {ideas} (not expression).

To make this section self-contained, we repeat material from Section~\ref{sec:naf}, quoting freely.

\subsection{Definitions and main results from \cite{VyasKB23}}
\label{sec:naf:def}
NAF is defined with respect to a function $\safe$.  
The $\safe$ function maps a copyrighted data point $c \in \C$ to a generative model $\safe_c\in \Models$ trained without access to $c$. 
An example $\safe$ function is $\shardsafe$.
We simplify the notation of \cite{VyasKB23} by fixing the divergence to maximum KL divergence.
\begin{definition}[Max KL divergence]
    For distributions $p$, $q$, $\dmax(p\|q) := \max_{y \in \Supp(p)} \log \frac{p(y)}{q(y)}.$
\end{definition}

\begin{definition}[$k_x$-NAF  \cite{VyasKB23}]
    Fix a set $\C$ and function $\safe: \C \to \Models$.
    A generative model $p$ is \emph{$k_x$-near access-free ($k_x$-NAF)} on prompt $x \in \cX$ with respect to $\C$ and $\safe$ if for every $c \in \C$,
\[
\dmax\biggl(\pdotx ~\|~ \safe_c(\cdot, x) \biggr) \leq k_x.
\]
A model $p$ is \emph{$k$-NAF} with respect to $\C$ and $\safe$ if for all $x \in \cX$, it is $k_x$-NAF for some $k_x \leq k$. 
\end{definition}

The appeal of this definition is that it can be used to bound the probability that model $p$ produces outputs that violate the copyright of a work $c$, relative to the probability under $\safe$. 
\begin{lemma}[$k$-NAF event bound \cite{VyasKB23}]\label{lemma:NAF-event-bound}
Suppose model $p$ is $k_x$-NAF on prompt $x$ with respect to $\C$ and $\safe$. Then for any $c \in \C$ and any event $E \sub \cY$:
\[
p(E|x) \le 2^{k_x}\cdot \safe_c(E|x).
\]
\end{lemma}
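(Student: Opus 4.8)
The plan is to derive the event bound by unpacking the definition of $k_x$-NAF into a pointwise ratio bound and then summing over the event $E$. Since $p$ is $k_x$-NAF on prompt $x$ with respect to $\C$ and $\safe$, by definition we have $\dmax\bigl(p(\cdot|x) \,\|\, \safe_c(\cdot|x)\bigr) \le k_x$ for the fixed $c \in \C$. Unwinding the definition of max KL divergence, this says precisely that for every $y$ in the support of $p(\cdot|x)$, the log-ratio $\log\frac{p(y|x)}{\safe_c(y|x)}$ is at most $k_x$. Exponentiating (base $2$, matching the $2^{k_x}$ in the conclusion and the bound in \eqref{eq:NAF-similarity-bound}) yields the pointwise inequality $p(y|x) \le 2^{k_x}\,\safe_c(y|x)$ for all $y \in \Supp(p(\cdot|x))$.

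Next I would sum this pointwise bound over the event. Writing $p(E|x) = \sum_{y \in E} p(y|x)$, I only need to account for those $y \in E$ lying in $\Supp(p(\cdot|x))$, since the remaining terms contribute zero to $p(E|x)$. For each such $y$ the pointwise bound applies, giving $\sum_{y \in E \cap \Supp(p(\cdot|x))} p(y|x) \le 2^{k_x}\sum_{y \in E \cap \Supp(p(\cdot|x))} \safe_c(y|x)$. Finally, I would extend the sum on the right from $E \cap \Supp(p(\cdot|x))$ to all of $E$: because each $\safe_c(y|x)\ge 0$, reinstating the missing nonnegative terms only increases the right-hand side, so $2^{k_x}\sum_{y \in E \cap \Supp(p(\cdot|x))} \safe_c(y|x) \le 2^{k_x}\sum_{y \in E} \safe_c(y|x) = 2^{k_x}\,\safe_c(E|x)$. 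Chaining these inequalities gives $p(E|x) \le 2^{k_x}\,\safe_c(E|x)$, as claimed.

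The only real subtlety, and hence the main (albeit minor) obstacle, is the bookkeeping around the support of $p(\cdot|x)$: the definition of $\dmax$ quantifies only over $\Supp(p(\cdot|x))$, so the pointwise ratio bound is \emph{not} guaranteed for $y$ outside that support, where $\safe_c(y|x)$ might even be zero. The resolution is exactly the asymmetry exploited above: such $y$ contribute nothing to $p(E|x)$, while their $\safe_c$-mass can only help the upper bound. No smoothness, measurability, or concentration arguments are needed. Since $\W$ is discrete, every quantity is a finite or countable sum of nonnegative terms, and the rearrangements are elementary.
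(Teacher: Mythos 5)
Your proof is correct, and it is exactly the standard argument: the paper states this lemma without proof (citing \cite{VyasKB23}), and the original proof proceeds just as you do---unwind the definition of $\dmax$ into the pointwise bound $p(y|x) \le 2^{k_x}\safe_c(y|x)$ on $\Supp(p(\cdot|x))$, then sum over $E$. Your handling of the support subtlety (points outside $\Supp(p(\cdot|x))$ contribute nothing to $p(E|x)$, while the corresponding nonnegative $\safe_c$-mass only loosens the right-hand side) is the right and complete resolution.
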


\noindent
Letting $E=\subsim(c)$ be the event that $y$ is substantially similar to $c$, we get \[p(\subsim(c) | x) \le 2^{k_x} \cdot \safe_c(\subsim(c)|x).\]
As copying requires substantial similarity, bounding $k_x$ suffices to prevent violation whenever $\safe_c(\subsim(c) | x)$ is negligibly small. Heuristically, we expect $\safe_c(\subsim(c) | x)$ to be negligible in $|c|$. It may sometimes be large, e.g., when $x$ contains the copyrighted work $c$ \cite{VyasKB23}.

Theorem~\ref{thm:NAF-cP} presents the main feasibility result of \cite{VyasKB23}:  any training algorithm $\train$ can be used as a black-box to construct an NAF model $\CP$, short for \emph{copy protection}. 
\begin{theorem}[\cite{VyasKB23}]
    Let $p$ be the model returned by $\CP$, and $q_1$ and $q_2$ be the models returned by $\shardsafe$. Then $p$ is $k_x$-NAF with respect to $\C$ and $\shardsafe$, with
    \begin{equation}
        k_x \le -\log\biggl(1-\dtv\bigl(q_1(\cdot|x), q_2(\cdot|x)\bigr)\biggr).
    \end{equation}
\end{theorem}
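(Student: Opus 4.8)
The plan is to unfold the $\CP$ construction and reduce the claim to a one-line pointwise bound on the likelihood ratio $p/\safe_c$. Recall that $\CP$ forms its model $p$ from the two shard models by taking their pointwise minimum and renormalizing: on prompt $x$,
\[
p(y\mid x) = \frac{\min\bigl(q_1(y\mid x),\, q_2(y\mid x)\bigr)}{Z_x}, \qquad Z_x = \sum_{y}\min\bigl(q_1(y\mid x),\, q_2(y\mid x)\bigr).
\]
The first step is to identify the normalizer $Z_x$ with the total-variation term in the statement. Using the standard identity $\sum_y \min(q_1,q_2) = 1 - \dtv(q_1, q_2)$, I get $Z_x = 1 - \dtv\bigl(q_1(\cdot\mid x),\, q_2(\cdot\mid x)\bigr)$, so that $-\log Z_x$ is exactly the claimed upper bound on $k_x$.

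Next I would invoke the defining property of $\shardsafe$: for every $c \in \C$, at least one shard is trained without access to $c$, and $\safe_c$ is precisely that shard model, say $\safe_c = q_j$ with $j = j(c) \in \{1,2\}$. This is where sharding does the real work — it guarantees that the safe model is literally one of the two models whose minimum defines $p$, which is what makes the ratio easy to control.

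With this in hand, fix any $c\in\C$ and any $y \in \Supp\bigl(p(\cdot\mid x)\bigr)$. Since $\min(q_1,q_2) \le q_j = \safe_c$ pointwise,
\[
\frac{p(y\mid x)}{\safe_c(y\mid x)} = \frac{\min\bigl(q_1(y\mid x),\, q_2(y\mid x)\bigr)}{Z_x\,\safe_c(y\mid x)} \le \frac{\safe_c(y\mid x)}{Z_x\,\safe_c(y\mid x)} = \frac{1}{Z_x}.
\]
Taking logarithms and maximizing over $y\in\Supp\bigl(p(\cdot\mid x)\bigr)$ yields $\dmax\bigl(p(\cdot\mid x)\,\|\,\safe_c(\cdot, x)\bigr) \le -\log Z_x$, and since $c$ was arbitrary this shows $p$ is $k_x$-NAF for $x$ with $k_x \le -\log\bigl(1 - \dtv(q_1(\cdot\mid x),\, q_2(\cdot\mid x))\bigr)$.

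I do not expect a substantive obstacle: once the construction is unfolded the argument is a direct computation. The two points needing care are the normalization identity $Z_x = 1 - \dtv$, which should be stated cleanly, and well-definedness of $p$ — namely $Z_x > 0$, so that the ratio and its logarithm make sense; this holds whenever the shard models overlap in support on $x$, and otherwise the NAF bound is vacuous. A minor bookkeeping check is to confirm that $\safe_c$ is genuinely realized as one of $q_1, q_2$ for each $c$, which is exactly the guarantee provided by $\shardsafe$.
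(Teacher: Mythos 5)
Your proof is correct and is essentially the standard argument for this theorem, which the paper itself imports from \cite{VyasKB23} without proof: the identity $Z_x = \sum_y \min\bigl(q_1(y\mid x), q_2(y\mid x)\bigr) = 1 - \dtv\bigl(q_1(\cdot\mid x), q_2(\cdot\mid x)\bigr)$, the observation that $\shardsafe$ always returns one of $q_1, q_2$ so that $\min(q_1,q_2) \le \safe_c$ pointwise, and the resulting bound $p(y\mid x)/\safe_c(y\mid x) \le 1/Z_x$ on the support of $p$. Your closing remark on well-definedness ($Z_x > 0$, i.e., the shard models must overlap in support) matches the paper's own observation following the theorem that $\CP(\D)$ is well-defined iff for all $x$ there exists $y$ with $q_1(y\mid x) > 0$ and $q_2(y\mid x) > 0$.
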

Observe that $\CP(\D)$ is well-defined if and only if $\forall x$, $\exists y$: $q_1(y|x) > 0$ and $q_2(y|x) > 0$.

\begin{algorithm}
    \caption{$\shardsafe$ \cite{VyasKB23}}\label{alg:shard-safe}
    \KwParams{Dataset $\D$, training algorithm $\train$}
    Do the following once: \tcp*[h]{or derandomize for statelessness}\;
        \Indp
        Partition $\D$ into disjoint datasets $\D_1$ and $\D_2$\;
        Set $q_1 \gets \train(\D_1)$, $q_2 \gets \train(\D_2)$\;
        \Indm
    \BlankLine
    \KwInput{$c \in \C$}
    Let $i = \min \{j : c \not \in D_j\}$\;
    \KwResult{$q_i$}
    \end{algorithm}

\begin{algorithm}
    \caption{$\CP$: Copy-Protection \cite{VyasKB23}}\label{alg:cp}
    \KwInput{Dataset $\D$}
    \KwLearn{Run $\shardsafe(\D)$ to obtain $q_1$, $q_2$ as in Algorithm~\ref{alg:shard-safe}}
    \KwResult{The model $p$ with 
    \[
    p(y|x) = \frac{\min\{q_1(y|x),q_2(y|x)\}}{Z(x)}
    \]
    where $Z(x)$ is a normalization constant that depends on $x$.
    }
\end{algorithm}

\subsection{Failures of NAF models}
\label{app:sec:naf:counter-examples}

\subsubsection{$\CP$ can regurgitate training data}
We show that $\CP$---the main NAF algorithm of \cite{VyasKB23}---can fail to protect against copying. Using a prompt containing no copyrightable expression, a user can cause the NAF model returned by $\CP$ to regurgitate copyrighted training data.
This is based on an observation of Thomas Steinke \cite{steinkePersonal}.

The following claim is a slight generalization of Theorem~\ref{thm:CP-counter-example:simplified}. In words, instantiating $\CP=\CP_\kv$ with the (tainted) training algorithm $\train_\kv$ (Algorithm~\ref{alg:train-cP-counter-example}) described in Example~\ref{example:tainted-kv}, causes training data regurgitation.

Note that $\train_\kv$ is itself built from some underlying training algorithm $\train_0$. The only requirement we need of $\train_0$ is that it produces models with \emph{full support}. That is, for every dataset $\D$, model $p_0\gets \train_0(\D)$, prompt $x$, and output $y \in \W$, we require $p(y|x) > 0$.

\begin{theorem}
    \label{thm:CP-counter-example}
Let $\D$ be a dataset. For ideas $\id \in \Ideas$, let $\D_\id = \{w' \in \D : \ideas(w') = \id\}$. 
Let $\train_0$ produce models with full support. 
Let $\train_\kv$, $\shardsafe_\kv$ and $\CP_\kv$ be as defined in Algorithms~\ref{alg:train-cP-counter-example}, \ref{alg:shard-safe}, and \ref{alg:cp}, defined with respect to $\train_0$ (and the preceding algorithms).
Let $p \gets \CP_\kv(\D)$. Then  for all $w\in\D$: \[p\left(\D_{\ideas(w)} ~|~ \ideas(w)\right) = 1.\]
\end{theorem}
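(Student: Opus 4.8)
The plan is to unwind the three nested algorithms and then reason purely about the support of $p(\cdot \mid \ideas(w))$. Writing $x = \ideas(w)$ and $S = D_{\ideas(w)} = \{w' \in \D : \ideas(w') = x\}$, I first recall that $\CP_\kv(\D)$ partitions $\D = \D_1 \sqcup \D_2$, sets $q_i \gets \train_\kv(\D_i)$, and returns the model $p(y \mid x) = \min\{q_1(y\mid x),\, q_2(y\mid x)\}/Z(x)$. So it suffices to pin down the two conditionals $q_1(\cdot\mid x)$ and $q_2(\cdot\mid x)$ and then track what the pointwise minimum and the normalization do.

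The key structural fact comes straight from $\train_\kv$ (Algorithm~\ref{alg:train-cP-counter-example}): on prompt $x$ the model $q_i$ consults its key-value store, so writing $S_i = (\D_i)_x = \{w' \in \D_i : \ideas(w') = x\}$, either $S_i \neq \emptyset$ and $q_i(\cdot \mid x)$ is uniform on $S_i$, or $S_i = \emptyset$ and $q_i(\cdot\mid x)$ falls back to $\train_0(\D_i)(\cdot\mid x)$, which has full support by hypothesis. Since $S = S_1 \sqcup S_2$ and $w \in S$, at least one $S_i$ is nonempty. I would then show the pointwise minimum vanishes off $S$: for any $y \notin S$ and any $i$ with $S_i \neq \emptyset$, we have $y \notin S_i$ and hence $q_i(y\mid x) = 0$, so $\min\{q_1(y\mid x), q_2(y\mid x)\} = 0$. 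Consequently $p(\cdot\mid x)$ is supported on $S$, which gives $p(S \mid x) = 1$ as soon as $p$ is well-defined.

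The step I expect to be the main obstacle is the normalization: one must argue $Z(x) = \sum_y \min\{q_1(y\mid x), q_2(y\mid x)\} > 0$, i.e.\ that the two conditionals have overlapping support. This is exactly where the full-support hypothesis on $\train_0$ is used. If exactly one shard contains a work with idea $x$ (say $S_1 \neq \emptyset$, $S_2 = \emptyset$), then $q_1(\cdot\mid x)$ is uniform on $S_1 \subseteq S$ while $q_2(\cdot\mid x)$ is strictly positive everywhere, so the minimum is positive on all of $S_1$ and $Z(x) > 0$. The genuinely degenerate case is when \emph{both} shards contain a work with idea $x$: then $q_1(\cdot\mid x)$ and $q_2(\cdot\mid x)$ are uniform on the disjoint sets $S_1$ and $S_2$, their supports do not meet, $Z(x) = 0$, and $\CP_\kv$ is simply undefined at $x$. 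I would dispatch this by noting that the hypothesis ``$p \gets \CP_\kv(\D)$'' presupposes that $\CP_\kv(\D)$ is a valid model --- equivalently, that $\exists y$ with $q_1(y\mid x),q_2(y\mid x)>0$ for every prompt, the well-definedness condition recorded just after the restated $\CP$ theorem --- which rules out this case and forces $Z(x)>0$. Combined with the support computation above, this yields $p(D_{\ideas(w)} \mid \ideas(w)) = 1$.
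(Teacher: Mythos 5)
Your proof is correct and takes essentially the same route as the paper's: unroll $\CP_\kv \to \shardsafe_\kv \to \train_\kv$, show the pointwise minimum $\min\{q_1(y\mid x),q_2(y\mid x)\}$ vanishes off $\D_{\ideas(w)}$ so the support argument gives the conclusion, and use full support of $\train_0$ to argue $Z(x)>0$. If anything you are more careful than the paper, whose proof asserts well-definedness whenever some $y\in\D_{\id}$ has $q_2(y\mid\id)>0$ and thereby glosses over the degenerate case where \emph{both} shards contain idea-$\id$ works---there (absent duplicates) $Z(\id)=0$, and your observation that the premise $p\gets\CP_\kv(\D)$ presupposes well-definedness is exactly the right way to dispatch it (with the minor caveat that if $\D$ contains a duplicated work split across the shards, $S_1$ and $S_2$ need not be disjoint, in which case $Z(x)>0$ and your support computation still closes the proof).
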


\noindent
In particular, for any copyrighted work $c\in \D$ whose ideas are unique, we get \[p(c ~|~ \ideas(c)) = 1.\]
Theorem~\ref{thm:CP-counter-example} does not contradict the $\CP$ theorem (Theorem~\ref{thm:NAF-cP}). This is because $\CP$ is only $k_x$-NAF, where $k_x$ depends on the prompt $x$. In our construction, the bound on $k_x$ given by Theorem~\ref{thm:NAF-cP} is not vacuous for $x \in \ideas(\D)$.

\begin{proof}[Proof of Theorem~\ref{thm:CP-counter-example}]
Unrolling the algorithms, $\CP_\kv(\D)$ calls $\shardsafe_\kv$, which shards the data into $\D_1$ and $\D_2$ and trains $q_i \gets \train_\kv(\D_i)$. Without loss of generality, suppose $w \in \D_1$. Let $\id = \ideas(w)$. 

By construction, $q_1(\D_{\id} | \id) = 1$ (as in Example~\ref{example:tainted-kv}). Thus, for all outputs $y\in \W$:
\[
p(y|\id) \propto \min\bigl\{q_1(y | \id), q_2(y | \id)\bigr\} 
= 
\begin{cases} 
    q_2(y | \id) & \text{if } y \in  \D_{\id } \\ 
    0 & \text{if } y\not \in \D_{\id} \end{cases}
\]
The distribution $p(\cdot | \id)$ is well-defined if there exists $y\in \D_{\id}$ such that $q_2(y|\id) > 0$. The model $q_2( \cdot | \id)$ returns an element of $\D_2 \cap \D_{\id}$ if it has non-empty intersection, or it returns a sample from an underlying model with full support (the model returned by $\train_0$). Either way, $q_2(y | \id) > 0$ for all $y \in \D_2 \cap \D_{\id}$.

The claim follows immediately:
\[p(\D_{\id} ~|~ \id) = 1 - p(\overline{\D}_{\id} ~|~ \id) = 1.\qedhere\]
\end{proof}

\begin{algorithm}
    \caption{$\train_\kv$ (for Example~\ref{example:tainted-kv} and Theorem~\ref{thm:CP-counter-example})}\label{alg:train-cP-counter-example}
    \KwParams{Training algorithm $\train_0$}
    \KwInput{Data $\D$}
    \KwOutput{Model $q_\kv$}
    \BlankLine
    Let $q_0 \gets \train_0(\D)$\;
    Initialize empty key-value store $I$, whose keys are ideas $\id$, values are sets of works $W \subset \W$\;
    \For{$w \in \D$}{
        $I[\ideas(w)] \gets I[\ideas(w)] \cup \{w\}$ \quad \texttt{// add w to I[ideas(w)]}\;
    }
    \BlankLine
    Let $q_\kv$ the conditional generative model which on prompt $x$ does the following:\;
    \Indp
    $W \gets I[x]$\;
    \lIf{$W \neq \emptyset$}{return $y$ sampled uniformly from $W$}
    \lElse{return $y$ sampled from $q_0(\cdot|x)$}
    \Indm
    \BlankLine
    \KwResult{$q_\kv$}
\end{algorithm}

\subsubsection{$k$-NAF does not prevent full reconstruction}
Our next theorem shows that $k$-NAF may allow training data to be reconstructed, even if $k$ is arbitrarily small and independent of $x$. 
This is because the $k$-NAF guarantee does not compose across a user's many queries, and each query may leak up to $k$ bits of training data. 

We give a family of models that are $k$-NAF with respect to a model that returns pure noise, yet enable a user to reconstruct the dataset verbatim.
For any $\ell\ge 1$, let $\coin_\ell(\cdot|x)$ be uniform over $\{0,1\}^{\ell}$ for all prompts $x$. As a generative model, $\coin_\ell(\cdot|\cdot)$ is clearly a ``safe'' instantiation of $\safe_c$ for any copyrighted work $c$. 
\begin{theorem*}[Theorem~\ref{thm:NAF-counter-example}]
    Fix $\C \sub \W \sub \{0,1\}^*$. For $\D \in \W^*$, let $L$ be total the length of $\D$ in bits. There exists a (deterministic) training algorithm $\train:(\D,k) \mapsto p_{k,\D}$ satisfying the following.
    \begin{itemize}
        \item For all $\D$ and $k>0$: $p_{k,\D}$ is $k$-NAF with respect to $\C$ and $\coin_{\ell}$ for $\ell = \max\{1,\lfloor k \rfloor\}$.
        \item There exists a user $\u$ such that for all $\D$ and $k>0$: $\u$ makes $\poly(L,1/k)$ queries to $p_{k,\D}$ and outputs $\D$ with probability $>0.99$.
    \end{itemize}
    \end{theorem*}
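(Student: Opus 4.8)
The plan is to build a training algorithm that encodes the whole dataset into the model's outputs, spreading each ``leaked'' piece of information thinly enough that the model stays close to the uniform reference $\coin_\ell$, yet recoverably enough that a user can reconstruct $\D$ by repeated sampling. Fix a canonical, efficiently decodable, \emph{prefix-free} encoding $s = \mathrm{enc}(\D) \in \{0,1\}^L$ of the dataset. Set $\ell = \max\{1, \lfloor k \rfloor\}$ and group $s$ into $\ell$-bit blocks $\beta_1, \beta_2, \dots$. On a prompt encoding an index $i$, the model $p_{k,\D}$ outputs a distribution over $\{0,1\}^\ell$ concentrated on the target block $\beta_i$ as much as the NAF budget allows: a point mass on $\beta_i$ when $k \ge 1$, and (for $\ell = 1$, $k < 1$) the maximally biased coin putting mass $2^{k-1}$ on the correct bit $s_i$ and $1 - 2^{k-1}$ on its complement. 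On indices past the end of $s$ it outputs the uniform distribution $\coin_\ell$. This map from $(\D, k)$ to the description of $p_{k,\D}$ is deterministic.

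First I would verify the NAF guarantee. Since $\coin_\ell$ is uniform on $\{0,1\}^\ell$ and every output distribution is supported there, $\dmax(p(\cdot|x)\,\|\,\coin_\ell) = \ell + \max_y \log p(y|x)$, so being $k$-NAF is equivalent to $\max_y p(y|x) \le 2^{k-\ell}$. For $k \ge 1$ we have $\ell = \lfloor k \rfloor$ and $2^{k-\ell} \ge 1$, so a point mass is admissible. For $k < 1$ we have $\ell = 1$ and $2^{k-1} < 1$; the biased coin attains $\max_y p = 2^{k-1}$, and its complement mass $1 - 2^{k-1} \le 2^{k-1}$ precisely because $2^k > 1$. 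The uniform fallback on out-of-range indices trivially satisfies $\max_y p = 2^{-\ell} \le 2^{k-\ell}$. Hence $p_{k,\D}$ is $k$-NAF with respect to $\coin_\ell$ on every prompt.

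Next I would describe the reconstruction user $\u$. A single query reveals $\ell$. When $\ell \ge 2$ (equivalently $k \ge 2$) every output is a point mass, so $\u$ reads each block off in one query and decodes in $O(L)$ queries. When $\ell = 1$, outputs are either a point mass (if $k \ge 1$) or the biased coin, and the bias magnitude $\delta := 2^{k-1} - \tfrac12$ is a function of $k$ alone and identical across all in-range prompts. So $\u$ can \emph{calibrate} it: query one fixed index under a doubling sample schedule and estimate the deviation of the majority outcome's empirical frequency from $\tfrac12$; a point mass registers maximal deviation $\tfrac12$ immediately, while a genuine coin is detected once the sample count reaches $\Theta(1/\delta^2)$. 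Using $\delta = \Omega(k)$ (from convexity of $2^k$), this costs $\poly(1/k)$ queries and yields an estimate good to a constant factor. The user then recovers each position by taking $m^\ast = O(\delta^{-2}\log(L/\eta))$ samples and majority-voting, which is correct per position with probability $1-\eta$ since the true bit is the strict mode. It reads indices $1, 2, \dots$ and uses the prefix-free structure of $\mathrm{enc}$ to detect when a complete dataset has been decoded, so it never needs $L$ in advance and never queries the noise region. A union bound over the at most $L$ positions with $\eta = 0.01/L$ gives success $> 0.99$, and the total query count is $O(L \cdot m^\ast) = \poly(L, 1/k)$.

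The routine parts are the NAF calculation and the per-position Hoeffding/majority argument. The part requiring the most care is that a \emph{single} user must succeed for all $\D$ and all $k$ without being told either: it knows neither the bias $\delta$ (hence the right sample count) nor the encoding length $L$. The two devices above---estimating the prompt-independent bias by calibration, and a prefix-free encoding that is self-terminating---resolve both, and the lower bound $\delta = \Omega(k)$ is what keeps calibration and voting within the allotted $\poly(L,1/k)$ budget. A minor point to handle cleanly is that $L$ appears in the per-position confidence $\eta$ before $L$ is known; this is dispatched by re-deriving the confidence from a running over-estimate of the number of decoded blocks, at only logarithmic cost.
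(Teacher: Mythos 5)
Your proposal is correct and is essentially the paper's own construction: encode $\D$ verbatim in the outputs, using a point mass on the relevant $\ell$-bit window when $k\ge 1$ and, for $k<1$, a coin with mass $2^{k-1}$ on the true bit (identical to the paper's $\Bern\bigl(\tfrac12+\beta(\D[x]-\tfrac12)\bigr)$ with $\beta=2^k-1$), followed by majority-vote reconstruction with $\poly(L,1/k)$ queries. Your calibration step and prefix-free self-terminating encoding are welcome refinements that make explicit how a \emph{single} fixed user handles unknown $k$ and $L$---a uniformity point the paper's proof glosses over---but they do not change the underlying argument.
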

    
    \begin{remark}
        The theorem and proof can be adapted to more realistic $\safe$ by encoding $\D$ in the bias of a hash of $p$'s outputs. 
        But the added complexity would obscure the technical idea used in the proof: biasing $\safe$ can reveal $\D$ without violating NAF. 
    \end{remark}
    
    \begin{proof}[Proof of Theorem~\ref{thm:NAF-counter-example}]
        We parse $\D$ as a bit string of length $L$, and let $\D[j]$ be its $j$th bit. We prove the result separately for $k\ge 1$ and $0<k<1$.
        
    For $k\ge 1$, $\train$ outputs the model $p_{k,\D}$ as follows:
    \[
        p_{k,\D}(\cdot|x) = \begin{cases}
            (\D[x],\D[x+1], \dots, \D[x+\ell-1]) & \text{if }x \in [L-\ell+1]\\
            \text{sample uniform } y\in\{0,1\}^{\ell}& \text{otherwise}
        \end{cases}
        \]
    The model $p_{k,\D}$ is $k$-NAF with respect to $\C$ and $\coin_\ell$: 
    $\forall x$, $\dmax\bigl(p_{k,\D}(\cdot|x) ~\|~ \safe_c(\cdot, x) \bigr) \le  k.$
    To reconstruct $\D$, the user $\u$ queries $p_\ell(\cdot|x)$ for $x = i\ell+1$ for $i = 0,1,\dots, (L-1)/\ell$.
    
    For $0<k<1$, $\train$ sets $\beta = 2^k-1$ and outputs the model $p_{k,\D}$ as follows:
    \[
        p_{k,\D}(\cdot|x) = \begin{cases}
            y\sim \Bern(\frac{1}{2} + \beta(\D[x] - \frac{1}{2})) & \text{if }x \in [L]\\
            y\sim \Bern(\frac{1}{2}) & \text{otherwise}
        \end{cases}
        \]
    The intuition is that $p_{k,\D}$ encodes $\D[x]$ in the bias of the output of $p_{k,\D}(\cdot|x)$, with the magnitude of $\beta \in (0,1)$ controlling the strength of the bias. The model $p_{k,\D}$ is $k$-NAF with respect to $\coin_1$: $\forall x$, $\dmax\bigl(p_{k,\D}(\cdot|x) ~\|~ \safe_c(\cdot, x) \bigr) = \log \frac{1/2 + \beta(\D[x]-1/2)}{1/2} \le \log (1 + \beta) \le k.$
    
    To determine $\D[j]$ with greater than $>1-\frac{1}{100L}$, the user $\u$ makes $\poly(L,\log 1/\beta) = \poly(L,1/k)$ queries to the model, and stores the majority.
    The user does this for each $j \in [L]$ and outputs the result.
    By a union bound, the user's output is equal to $\D$ with probability greater than $0.99$.
    \end{proof}

\subsection{\emph{VA3}: an empirical attack on NAF \cite{li2024va3}} 
\label{app:sec:va3}
Li, Shen, and Kawaguchi propose and empirically evaluate a ``Virtually Assured Amplification Attack'' against NAF~\cite{li2024va3}. 
At a high level, \cite{li2024va3} shows that an attacker, interacting with an NAF model in a black-box manner, can reliably induce a model to produce outputs that infringe on a target work $c^*$. They also give a white-box prompt writing algorithm for diffusion models (Anti-NAF) that greatly improves the performance of their attacks.
    
Overall, the work provides good evidence that the algorithms proposed by \cite{VyasKB23} may not prevent infringement. Some uncertainty remains because the algorithm implemented in \cite{li2024va3} deviates from the original, as explained below. The paper leaves open whether $k$-NAF (with fixed $k$) prevents copyright infringement, formalizes a flawed attack model, and avoids the underlying definitional questions almost entirely.

We now describe \cite{li2024va3} in more detail, offering a somewhat different interpretation than the original.

The paper's main focus is the Amplification Attack. Given a target copyrighted work $c^*$, an attacker repeatedly queries a model with some prompt $x=x(c^*)$. It returns the generation most similar to $c^*$. This process amplifies the one-shot probability of infringement. For example, from $0.40\%$ to $13.64\%$ when $x$ is the original caption of $c^*$ in the training data, or from $8.52\%$ to $77.36\%$ using the Anti-NAF prompt generator.
In our view, this not actually the paper's main negative result for NAF. 

More significant (for our purposes) is the existence of prompts $x$ whose one-shot probability of infringement is non-negligible: 0.40\% or 8.52\% in the previous example. The message is similar to our Theorem~\ref{thm:CP-counter-example}. Namely, that the NAF constructions of \cite{VyasKB23} admit models for which prompts $x$ derived from non-copyrightable aspects of $c^*$ can produce infringing generations. Based on the top half of \cite[Table 1]{li2024va3}, the prompts trivialize NAF's protection by making $k_x \approx \log(1/\safe(\subsim(c^*) ~|~ x))$.

Though the paper does not speculate, we suspect that the mechanism for the failure is similar to our construction in Theorem~\ref{thm:CP-counter-example}. The model, repeatedly fine-tuned on $c^*$, regurgitates $c^*$ with high enough probability that it survives the reweighting of $\CP$ and $\CPk$  \cite[Fig.~8]{li2024va3}.

Now we turn to the paper's limitations.

First, the attack model has a conceptual flaw, highlighting the need for our definitions-first approach. The attacker knows $c^*$ and is actively trying to induce a similar generation \cite[Sec.~4.1]{li2024va3}. Indeed, the Amplification Attack \emph{requires} knowing $c^*$. This setup allows trivial attacks, like prompting $\mathtt{return\ c^*}$. Such an attack would not be meaningful. Any infringement would be the users' fault, not something we care to prevent. 
Still, the flaw in the attack model doesn't affect the paper's experiments. Because they use a text-to-image model, he prompts used in the attack are just a few words long and contain nothing copyrightable. 

Second, the results say nothing about $k$-NAF, where a fixed $k$ upper-bounds $k_x \le k$ for every prompt $x$.  At best, the experiments are negative results for the particular $k_x$-NAF algorithm $\CPk$ given in \cite{VyasKB23}, where $k_x$ depends on $x$. But as we explain next, it is not entirely clear. (Our Theorems~\ref{thm:CP-counter-example:simplified} and~\ref{thm:NAF-counter-example} are for $k_x$-NAF and $k$-NAF algorithms, respectively.)

Third, the algorithm in \cite{li2024va3} deviates from $\CPk$ in an important way. As originally defined, $\CPk$ is a rejection-sampling version of $\CP$ (Algorithm~\ref{alg:cp}). Given a model $p$, a safe model $\safe$, and constant $k$, prompt $x$, and generation $y$, let $\rho(y|x)=\log\left(p(y|x)/\safe(y|x)\right)$.
Oversimplifying, $\CPk$ repeatedly samples $y\gets p(\cdot |x)$ until $\rho(y|x)\le k$, and returns the final sample. \cite{VyasKB23} proves that $\CPk$ is $k_x$-NAF for $k_x = k+\log(1/\nu(x))$, where $\nu(x)=\Pr_y[\rho(y|x)\le k]$.

\cite{li2024va3}'s implementation differs. Instead of fixing $k$, they fix $\nu(x)$. The experiments sample many generations $y\gets p(\cdot |x)$, and return the $\nu(x)=5\%, 10\%, \ldots$ with smallest $\rho(y|x)$. This strikes us as a meaningful difference. At a minimum, the $\CPk$ theorem doesn't apply as is. Despite the gap, we view the results of \cite{li2024va3} as strong evidence of weaknesses of $\CPk$. We conjecture that there is a value of $k_x$ for which the implemented version achieves $k_x$-NAF with high probability, but did not attempt to prove it.

\subsection{On NAF's legal relevance}
\label{app:sec:naf:legal}
NAF is motivated by two concepts from copyright law: \emph{access} and \emph{substantial similarity}.
The plaintiff in a copyright infringement claim has the burden of proving that the defendant copied original expression from the copyrighted work.
    The plaintiff does so by proving that (i) the defendant had access to the copyrighted work, and (ii) the defendant's work is substantially similar to the plaintiffs work.

With the above in mind, \cite{VyasKB23} explain the relevance of NAF to copyright liability.
\begin{quote}
    To show a copyright violation has occurred the plaintiff must prove that ``there are substantial similarities between the defendant's work and original elements of the plaintiff's work'' (assuming access). Its negation would be to show that defendant's work is not substantially similar to the original elements of the plaintiff's work. Our approach would instead correspond to showing that the defendant's work is close to a work which was produced without access to the plaintiff's work. [W]e think this is a stronger guarantee...
\end{quote}
As for why it's a stronger guarantee, the argument is as follows.
The probability that the defendant's work---produced by the real model $p$---is substantially similar to plaintiff's work is not much greater than the probability would have been had the defendant used the safe model (which had no access). We heuristically expect the latter probability to be miniscule. Hence, substantial similarity between the defendant's and plaintiff's works is exceedingly unlikely.

Elkin-Koren et al.\ \cite{elkinkoren2024copyright} correctly argue that copyright cannot be ``reduced to privacy.'' However, this is a straw man of version of \cite{VyasKB23} and the present paper (see Appendix~\ref{sec:niva} for additional discussion). 

The sharpest criticism is by Lee, Cooper, and Grimmelman who argue that NAF is simply wrong on the law \cite{lee2024talkin}. ``[NAF] is explicitly inspired by copyright's concept of access, but copyright law itself does not work that way. Just as two authors can independently create identical works and each hold a copyright in theirs, it is not a defense to copyright infringement that you would have copied the work from somewhere else if you hadn't copied it from the plaintiff.''

We agree that NAF's envisioned legal defense doesn't work (technical guarantees aside). To see why, consider a case in which the model's generated output was in fact substantially similar to a piece of training data.
As to the access element, the defendant did in fact have access to the copied work by way of the model. The defendant's work may even be so ``strikingly similar'' to the plaintiff's that access becomes moot.\footnote{%
    ``The plaintiff can prove that the defendant copied from the work by proving by a preponderance of the evidence that ... there is a striking similarity between the defendant's work and the plaintiff's copyrighted work.'' From the Ninth Circuit's \emph{Manual of Model Civil Jury Instructions} \url{https://www.ce9.uscourts.gov/jury-instructions/node/326}.}
Despite being central to NAF, \emph{access} appears to be a red herring. Whatever copyright protection NAF offers is by way of minimizing the likelihood of producing substantially similar outputs.

\section{Differential privacy background~\cite{dwork2006calibrating}}
\label{app:DP:prelims}
An algorithm is differentially private (DP) if its output never depends too much on any one unit of input data. How much is ``too much'' is governed by a parameter $\eps >0$. Smaller values of $\eps$ provide stronger guarantees. In this work, a ``unit of input data'' is one work $w$ in the training dataset $\D$. 

\begin{definition}[Neighboring datasets]
    Datasets $\D,\D' \in \W^*$ are \emph{neighboring} if they differ by inserting or deleting a single element. We denote neighboring datasets by $\D \sim \D'$. 
    \end{definition}

\begin{definition}[$(\eps,\delta)$-Differential privacy]
    Let $\eps,\delta \ge 0$, and let $M:\W^* \to \Omega$ be an algorithm mapping dataset $\D \in \W^*$ to some output domain $\Omega$. $M$ is \emph{$(\eps,\delta)$-differentially private ($(\eps,\delta)$-DP)} if for all neighboring pairs $\D, \D'\in \W^*$, and all subsets of outputs $S\sub \Omega$: 
    \[\Pr[M(\D) \in S] \le e^\eps \cdot \Pr[M(\D') \in S] + \delta.\]
\end{definition}

Anything that one does with the output of a DP algorithm is also DP. In DP parlance, DP is robust to post-processing in the presence of arbitrary auxiliary information.
Formally, for any function $f$, any string $\aux$, and any set $S$:
\begin{equation}
    \label{eq:dp-with-postprocessing}
    \Pr[f(M(\D), \aux) \in S] \le e^\eps \cdot \Pr[f(M(\D',\aux)) \in S] + \delta.
\end{equation}

\section{Proof of Theorem~\ref{thm:tainted-unclean}}
\label{app:thm:tainted-unclean}

We restate the theorem for convenience.
\begin{theorem}%
    \label{thm:app:tainted-unclean}
    Fix $\beta<1$, $n\ge 1$, and $m\ge\frac{n}{\beta}$. 
    Suppose there exists $W = \left(w_{i,j}\right) \in \W^{m\times n}$ such that for all $i \neq i' \in [m]$ and all $j \in [n]$:
    \[\ideas(w_{i,j}) = \ideas(w_{i',j}) \quad\text{and}\quad \subsim(w_{i,j}) \cap \subsim(w_{i',j}) = \emptyset.\]
    If $\train$ be tainted with respect to $\ideas$, then $\train$ is not $(\kappa,\beta)$-clean.
\end{theorem}
The proof uses Lemma~\ref{lemma:tainted:bad-d}, stated below.
\begin{proof}[Proof of Theorem~\ref{thm:app:tainted-unclean}]
    \newcommand{\h}[1]{\tilde{#1}}
    Fix $\ideas$ and let $\train$ be tainted with respect to $\ideas$. Let $\u$ be the user guaranteed by taintedness of $\train$, and $\tau$ its output distribution.
    We must show $\train$ is not $(\kappa,\beta)$-clean.
    Namely, that there exist $\h{u}$, $\h{\C}$, and $\h{D}$ for which (i) $\h{u}$ is $\beta$-blameless, and (ii) there exists $\h\aux$ such that $\h{\tau}(\subsim(\h{\C});~\h{\aux})\ge \kappa$. Here, $\h{\tau}$ is $\h{u}$'s output distribution. 

    Let $\id = \ideas(w_{1,1})$. Define $\h\u = \u_{\id}$ to be the user with $\id$ hard-coded that simulates $\u(\id)$, always ignoring its auxiliary input $\aux$. 
    By construction, $\h\tau(E;\aux) = \tau(E;\id)$ for all $\aux$ and all events $E$. That is, the event $E$ is independent of $\aux$.
    Because $\h\u$ ignores its auxiliary input, we can invoke Lemma~\ref{lemma:tainted:bad-d}. Thus, there exists $\h\D$ such that $\h\u$ is $\beta$-blameless with respect to $\h\D$, $\h\C = \h\D$, and $\train$ (condition (i) above). 
    Moreover, $w_{i,1}\in \h\D$ for some $i\in [n]$.

    Let $\id' =\ideas(w_{i,1})$. By hypothesis, $\id = \id$ and therefore $\h\u = \u_{\id'}$.
    Taintedness of $\train$ implies condition (ii) above:
    \[\h\tau(\subsim(\C);\bot) = \tau(\subsim(\C);\id') = \tau(\subsim(\D);\id')>0.99 \ge \kappa. \qedhere\]
\end{proof}

\begin{lemma}
    \label{lemma:tainted:bad-d}
    Fix $\beta<1$, $n\ge 1$, and $m\ge\frac{n}{\beta}$. 
    Suppose there exists $W = \left(w_{i,j}\right) \in \W^{m\times n}$ such that for all $i \neq i' \in [m]$ and all $j \in [n]$:
    $\subsim(w_{i,j}) \cap \subsim(w_{i',j}) = \emptyset.$    
    Let $\u$ be a user that ignores its auxiliary input: $\u^p(\aux) = \u^p(\bot)$ for all $\aux$.
    For all $\train$, there exists $x\in [m]^n$ such that $\u$ is $\beta$-blameless with respect to $\D^{(x)}=\left(w_{x_j,j}\right)_{j\in[n]}$, $\C=\D^{(x)}$, and $\train$.
\end{lemma}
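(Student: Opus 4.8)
The plan is to use the probabilistic method: draw the selection vector $x\in[m]^n$ uniformly at random and show that the expected number of ``bad'' columns is strictly below $1$, so that some realization has none. First I would simplify the target condition. Since the lemma sets $\C=\D=\D^{(x)}$ and every work stems from itself, every $c\in\C$ already lies in $\C_\D$, so $\C_{-\D}=\emptyset$. Consequently $\subsim(\C_{-\D}\cup\{c\})=\subsim(c)$, and because $\u$ ignores its auxiliary input, $\beta$-blamelessness (Definition~\ref{def:blameless}) is equivalent to requiring, for every column $j\in[n]$, that $\tau_{-w_{x_j,j}}\!\bigl(\subsim(w_{x_j,j})\bigr)\le\beta$. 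Call column $j$ \emph{bad} (for a given $x$) if this inequality fails; the goal is to find $x$ with no bad column.

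The core estimate is a per-column bound $\Pr_x[\text{column }j\text{ bad}]\le \tfrac{1}{\beta m}\le \tfrac1n$, driven by the within-column disjointness of the $\subsim$-sets. Fix the other coordinates $x_{-j}$, let $X_j$ be uniform on $[m]$, and consider the joint experiment of sampling $X_j$, training the clean-room model, and drawing the user's output $z$. By disjointness, $z$ is substantially similar to at most one $w_{i,j}$; writing $i^\ast(z)$ for that index when it exists, the event $z\in\subsim(w_{X_j,j})$ coincides with $\{X_j=i^\ast(z)\}$. If the law of $z$ is independent of $X_j$, then $\Pr[z\in\subsim(w_{X_j,j})]\le \E_z[1/m]\le 1/m$, and Markov's inequality upgrades this to $\Pr_{X_j}\!\bigl[\tau_{-w_{X_j,j}}(\subsim(w_{X_j,j}))>\beta\bigr]\le \tfrac{1}{\beta m}$. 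Summing over the $n$ columns and invoking $m\ge n/\beta$ bounds the expected number of bad columns by $1$; the slack in the estimate (the indicator $i^\ast(z)$ is often undefined) makes this strict, so a good $x$ exists.

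The main obstacle is precisely the independence assumption used above: that the user's clean-room output law for column $j$ does not depend on $x_j$. This can fail, because $\tau_{-w_{x_j,j}}$ is induced by a model trained on $\scrub(\D^{(x)},w_{x_j,j})$, and scrubbing deletes not only $w_{x_j,j}$ but every \emph{other} selected work that stems from it; thus the scrubbed dataset, and hence the distribution of $z$, may move as $x_j$ varies. I would resolve this by arranging the scrub to remove only the single element $w_{x_j,j}$ -- i.e.\ by ensuring the selected dataset $\D^{(x)}$ is golden, equivalently that the selected works form an antichain in the dependency graph -- so that the column-$j$ training set depends on $x_{-j}$ alone and the decoupling becomes exact. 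The delicate part is securing this jointly across all columns while retaining enough freedom within each column for the counting to close; the natural routes are either to fold the antichain requirement into the same random choice of $x$, or to partition each column's rows according to the pattern of which other selected works stem from them and to bound the number of such patterns, charging the few ``pattern-splitting'' rows against the slack in the bound $m\ge n/\beta$. This coupling is where I expect the real work to lie.
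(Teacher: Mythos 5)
Your counting engine is sound, and once completed it is essentially the paper's argument repackaged by the probabilistic method: both proofs reduce blamelessness to the per-work condition $\tau_{-w}(\subsim(w))\le\beta$ via $\C_{-\D}=\emptyset$, and both exploit within-column disjointness to say that, once the clean-room distribution for column $j$ is decoupled from $x_j$, the $m$ events $\subsim(w_{1,j}),\dots,\subsim(w_{m,j})$ have total mass at most $1$, so few rows can carry mass above $\beta$. (The paper runs this deterministically: it sets $j^*(x)=\argmax_j p_j(x)$, pigeonholes to find at least $(m^n/n)/m^{n-1}=m/n$ strings sharing the same worst column and agreeing on all other coordinates, notes these induce a common clean-room distribution and pairwise disjoint target events, hence some $x$ with $p_{j^*}(x)\le n/m\le\beta$, and finishes by maximality of $j^*$. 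Your per-column Markov bound plus a union bound is an equivalent averaging.) The decisive difference is the step where you ``expect the real work to lie.'' The paper dispatches it in a parenthetical: since the lemma serves to build a counterexample (Theorem~\ref{thm:tainted-unclean}), and the copyright dependency graph is an exogenous interface object on par with $\subsim$ and $\ideas$, the prover is free to \emph{choose} the graph---e.g., only the mandatory self-loops---so that $\scrub\bigl(\D^{(x)},w_{x_j,j}\bigr)=\bigl(w_{x_{j'},j'}\bigr)_{j'\neq j}$ holds by construction and the decoupling is exact. Neither of your repair routes (randomizing toward an antichain, or partitioning rows by stemming patterns) is needed; carrying them out against an adversarially fixed graph would be proving a stronger statement than the lemma asserts. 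As written, your proof is incomplete exactly here, not because the step is hard but because you overlooked that the graph is yours to pick.

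A second, fixable flaw: your source of strictness is wrong. Nothing guarantees that $i^*(z)$ is ``often undefined''---the output may always land in some $\subsim(w_{i,j})$---and plain Markov only gives $\Pr[\text{column } j \text{ bad}]\le \tfrac{1}{\beta m}\le \tfrac1n$, so the union bound yields $\le 1$, which does not produce a good $x$. The correct fix uses the strictness of the bad event itself: badness is $Y>\beta$ with $Y=\tau_{-w_{X_j,j}}(\subsim(w_{X_j,j}))$, and $\beta\Pr[Y>\beta]<\E[Y]\le \tfrac1m$ whenever $\Pr[Y>\beta]>0$ (equivalently: disjointness means strictly fewer than $1/\beta$ rows in a column can have clean-room mass exceeding $\beta$), giving $\Pr[\text{column } j \text{ bad}]<\tfrac1n$ uniformly in $x_{-j}$ and hence a strict union bound. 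With that patch and the choice of dependency graph, your argument closes and is a perfectly serviceable alternative to the paper's pigeonhole.
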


\begin{proof}[Proof of Lemma~\ref{lemma:tainted:bad-d}]
    \newcommand{\Dx}{\D^{(x)}}
    \newcommand{\taux}{\tau^{(x)}}
    \newcommand{\jx}{j^*(x)}
    Because $\u$ ignores $\aux$, we omit it throughout.
    Take $\C = \D$, which implies $\C_{-\D} = \emptyset$. Hence, user $\u$ is $\beta$-blameless if for all $w \in \D$:
        $\tau_{-w}(\subsim(w))\le\beta.$

    Let $\Dx_{-j} = \scrub(\Dx,w_{x_j,j}):=(w_{x_{j'},j'})_{j'\neq j}$. (For a counter example, we can choose the copyright dependency graph and hence $\scrub$.)
    Define \[p_j(x) = \taux_{-w_{x_j,j}}(\subsim(w_{x_j,j})) = \Pr_{\substack{p \gets \train(\Dx_{-j})\\z \gets \u^p(\bot)}}[z\in \subsim(w_{x_j,j})].\]
    We must show that:
    \begin{equation}
        \exists x \in [m]^n, \forall j\in [n]: \quad p_j(x)\le \beta. \tag{$\star$}\label{eq:lemma:counting-condition}
    \end{equation}

    For $x\in [m]^n$, define $\jx = \argmax_{j\in [n]} p_j(x)$. 
    Define the sets $Z_j = \{x\in [m]^n : \jx = j\}.$
    One of the $Z_j$ contains at least $m^n/n$ distinct strings.
    Moreover, there is a subset $X^* \sub Z_j$ of at least $(m^n/n)/m^{n-1} = m/n$ strings that agree on all but the $j$th coordinate.
    
    Summarizing, the set $X^*$ satisfies three properties. (1)~For all $x,x'\in X^*$, $j^*(x) =j^*(x') =: j^*$. (2)~All $x,x'\in X^*$ disagree at the $j^*(x)$th coordinate, and agree on all other coordinates. (3)~$|X^*| \ge m/n$.

    By (2), $\D^{(x)}_{-j^*} = \D^{(x')}_{-j^*}$ for all $x,x' \in X^*$.
    It follows that $\tau^{(x)}_{-w_{x_{j^*},j^*}} = \tau^{(x')}_{-w_{j^*},x'_{j^*}} =: \tau^*$ for all $x,x' \in X^*$. 
    Consider the events $E_x = \subsim(w_{x_{j^*},j^*})$ for $x\in X^*$. By hypothesis and by (2), these events are disjoint. Hence, $\sum_{x\in X^*} \tau^*(E_x) \le 1$. By (3), there exists $x\in X^*$ such that $p_{j^*}(x) = \tau^*(E_x) \le n/m\le \beta$. By (1) and definition of $\jx$, we have for all $j\in [n]$, $p_j(x) \le p_{j^*}(x) \le \beta$. This proves~\eqref{eq:lemma:counting-condition} and completes the proof.
\end{proof}

\end{document}